\documentclass[reqno,oneside,11pt]{amsart}

\usepackage[foot]{amsaddr}
\usepackage[utf8]{inputenc} 
\usepackage{amsmath, amssymb,bm, cases, mathtools, thmtools}
\usepackage{caption}
\usepackage{subfigure}
\usepackage{verbatim}
\usepackage{graphicx}\graphicspath{{figures/}}
\usepackage{multicol}
\usepackage{tabularx}
\usepackage[usenames,dvipsnames]{xcolor}
\usepackage{mathrsfs} 
\usepackage{url}
\usepackage{bbm}
\usepackage{enumitem}
\usepackage{natbib}


\usepackage[colorlinks,citecolor=blue,urlcolor=blue,linkcolor=RawSienna]{hyperref}
\usepackage{hypernat}
\usepackage{datetime}


\DeclareMathAlphabet\EuRoman{U}{eur}{m}{n}
\SetMathAlphabet\EuRoman{bold}{U}{eur}{b}{n}


\usepackage{euscript,microtype}

\usepackage[capitalize]{cleveref}

\crefname{lemma}{Lemma}{Lemmas}
\crefname{corollary}{Corollary}{Corollaries}
\crefname{theorem}{Theorem}{Theorems}

\makeatletter
\let\reftagform@=\tagform@
\def\tagform@#1{\maketag@@@{\ignorespaces\textcolor{gray}{(#1)}\unskip\@@italiccorr}}
\renewcommand{\eqref}[1]{\textup{\reftagform@{\ref{#1}}}}
\makeatother

\setlength{\marginparwidth}{1.25in}







\definecolor{WowColor}{rgb}{.75,0,.75}
\definecolor{SubtleColor}{rgb}{0,0,.50}



\newcounter{margincounter}


\declaretheorem[style=plain,numberwithin=section,name=Theorem]{theorem}

\declaretheorem[style=plain,sibling=theorem,name=Proposition]{proposition}
\declaretheorem[style=plain,sibling=theorem,name=Corollary]{corollary}

\declaretheorem[style=remark,qed=$\triangleleft$,sibling=theorem,name=Remark]{remark}

\numberwithin{theorem}{section}




\def\[#1\]{\begin{align}#1\end{align}}
\def\*[#1\]{\begin{align*}#1\end{align*}}

\newcommand{\bigO}{\mathcal{O}}
\newcommand{\Cov}{\textrm{Cov}}

\newcommand{\tK}{\tilde{K}}

	 



\newcommand{\Reals}{\mathbb{R}}

\newcommand{\dee}{\mathrm{d}}

\DeclareMathOperator*{\newlim}{\mathrm{lim}\vphantom{\mathrm{infsup}}}
\DeclareMathOperator*{\newmin}{\mathrm{min}\vphantom{\mathrm{infsup}}}
\DeclareMathOperator*{\newmax}{\mathrm{max}\vphantom{\mathrm{infsup}}}
\DeclareMathOperator*{\newinf}{\mathrm{inf}\vphantom{\mathrm{infsup}}}
\DeclareMathOperator*{\newsup}{\mathrm{sup}\vphantom{\mathrm{infsup}}}
\renewcommand{\lim}{\newlim}
\renewcommand{\min}{\newmin}
\renewcommand{\max}{\newmax}
\renewcommand{\inf}{\newinf}
\renewcommand{\sup}{\newsup}

\renewcommand{\Pr}{\mathbb{P}}
\def\EE{\mathbb{E}}














\usepackage{setspace}
\onehalfspacing 
\usepackage{bbm,multirow,longtable,rotating,array,booktabs,float,graphicx}
\usepackage{hyperref}

	
\title[State-domain Change Point Detection]{State-domain Change Point Detection for Nonlinear Time Series Regression}
\author{Yan Cui$^{1,2}$}
\author{Jun Yang$^{3}$}
\author{Zhou Zhou$^1$}
\address{$^1$ Department of Statistical Sciences,
	University of Toronto, Canada}
\address{$^2$ Institute for Advanced Study in Mathematics, Harbin Institute of Technology, China}
\address{$^3$ Department of Statistics, University of Oxford, United Kingdom}
\email{\{cui,zhou\}@utstat.toronto.edu} \email{jun.yang@stats.ox.ac.uk}

\begin{document}	
	\begin{abstract}
		Change point detection in time series has attracted substantial interest, but most of the existing results have been focused on detecting change points in the time domain. This paper considers the situation where nonlinear time series have potential change points in the state domain. We apply a density-weighted anti-symmetric kernel function to the state domain and therefore propose a nonparametric procedure to test the existence of change points. When the existence of change points is affirmative, we further introduce an algorithm to estimate the number of change points together with their locations. Theoretical results of the proposed detection and estimation procedures are given and a real dataset is used to illustrate our methods.\\
		\textit{Key words}: Change-point detection; Nonlinear time series; Nonparametric hypothesis test; State domain.
		
	\end{abstract}
	
	\maketitle


	\section{Introduction}

Consider the following state-domain nonlinear auto-regression
\[
X_i=\mu(X_{i-1})+\epsilon_i,\label{new-model}
\]
where $\mu(\cdot)$ is an unknown regression function, $\{\epsilon_i\}$ is a martingale difference sequence such that $\EE[\epsilon_i\mid(\epsilon_{i-1},\epsilon_{i-2},\cdots)]=0 $ almost surely. 
Special cases of \cref{new-model} include threshold  AR models \citep{Tong1990}, exponential AR models \citep{Hagg1981} and ARCH models \citep{Engle1982}, among others. Furthermore, \cref{new-model} can be viewed as a discretized version of the diffusion model
\[
\dee X_t=\mu(X_t)\dee t+\dee \mathbb{M}(t),\label{model2}
\]
where $\mu(\cdot)$ is the instantaneous return or drift function, and $\{\mathbb{M}(t)\}$ is a continuous-time martingale. In the literature, the special case of Model \eqref{model2} with $\dee \mathbb{M}(t)=\sigma(X_t)\dee \mathbb{B}(t)$ has been widely discussed to understand and model nonlinear temporal systems in economics and finance, where $\mathbb{B}(t)$ denotes the standard Brownian motion and $\sigma^2(\cdot)$ is understood as the volatility function. Among others, \cite{Stan1997}, \cite{CP2000} and \cite{Fan2003} considered the nonparametric estimation of $\mu(\cdot)$ and $\sigma^2(\cdot)$. \cite{Zhao2011} addressed the model validation problem for \cref{model2}. In particular, \cref{model2} can be used to model the temporal dynamics of financial data with $\{X_t\}$ being interest rates, exchange rates, stock prices or other economic quantities. Among others, \cite{Zhao2006} considered kernel quantile estimates of \cref{model2} for the exchange rates between Pound and USD. \cite{Liu2010a} constructed simultaneous confidence bands for $\mu(\cdot)$ and $\sigma(\cdot)$ with the U.S. Treasury yield curve rates data. See also the latter papers for further references. Observe that we allow the error process to be general martingale differences in \eqref{new-model} which significantly expands the applicability of our theory and methodology in economic applications. As pointed out by one referee, conditional moment restrictions in dynamic economic models routinely arise from Euler/Bellman equations in dynamic programming, which are martingale differences. Furthermore, asset returns, due to the no-arbitrage theory, are (semi)martingales. Hence, their (demeaned) returns are martingale differences.

Throughout this article, following Chapter 6.3 of \cite{Fan03}, we shall call \eqref{new-model} a state-domain nonlinear regression model. The term ``state domain'' originated from the celebrated state-space models (e.g. \citet{kalman1960new} and  \citet[Chapter 6]{shumway2000time}) where the dynamics of a sequence of state variables ($\{X_i\}$ in \cref{new-model}) are driven by a group of control variables ($\epsilon_{i}$ in \cref{new-model}) through the nonlinear state equation \eqref{new-model}. Therefore in this article the term ``state domain'' refers to the Euclidean space in which the variables on the axes are the state variables. Observe that the state-domain nonlinear regression \eqref{new-model} aims to characterize the relationship between $X_i$ and past values (states) of the time series  through a discretized stochastic differential equation. On the contrary, time-domain nonlinear regression (see e.g. \cite{Fan03}, Chapter 6.2)
\[
X_i=f(i/n)+\varepsilon_i, \quad i=1,2,\cdots, n\label{time-domain}
\]
with $\EE[\varepsilon_i]=0$ describes the relationship between $X_i$ and time.   

To date, most investigations on the nonparametric inference procedure of \cref{new-model} are based on the assumption that the underlying regression function $\mu(\cdot)$ is continuous, which may cause serious restrictions in many real applications. In fact, in parametric modeling of nonlinear time series, various choices of $\mu(\cdot)$ with possible discontinuities have drawn much attention in the literature. One of the most prominent examples is the threshold model proposed by \cite{Tong1980}, in which regime switches are triggered by an observed variable crossing an unknown threshold. Also, AR model with regime-switch controlled by a Markov chain mechanism was introduced by \cite{Tong1990}. In economics, the expanding phase and contracting phase are not always governed by the same dynamics, see \cite{Tiao1994,DJ1995,MP2000} and other references therein. As a result, the occurrence of abrupt changes in the state-domain regression function $\mu(\cdot)$ is common and detecting as well as estimating them are of vital importance. Motivated by this, in the current paper we focus on the situation where the regression function $\mu(\cdot)$ is piece-wise smooth on an interval of interest $T=[l,u]$ with a finite but unknown number of change points. More precisely, there exist $l=a_0<a_1<\cdots<a_M<a_{M+1}=u$ such that $\mu(\cdot)$ is smooth on each of the intervals $[a_0,a_1),\cdots,[a_M,a_{M+1}]$; that is, on the interval $[l,u]$
\[
\mu(x)=\sum_{j=0}^M\mu_j(x)\mathbbm{1}(a_j\le x <a_{j+1}), \label{e2}
\]
where $M$ is the total number of change points. Throughout this article, we assume $M$ is fixed. 

To our knowledge, there exist no results on change point detection of the state-domain regression function $\mu(\cdot)$ in the literature. The purpose of this paper is twofold. First we want to test whether $\mu(x)$ is smooth or discontinuous on the interval $[l,u]$; that is to test the null hypothesis $H_0: M=0$ of \cref{e2}. By sliding a density-weighted anti-symmetric kernel through the state domain, we shall suggest a nonparametric test statistic and non-trivially apply the discretized multivariate Gaussian approximation result of \cite{Zaitsev1987} to establish its asymptotic distribution. Additionally, the Gaussian approximation results also directly suggest a finite sample simulation-based bootstrapping method which improves the accuracy of the test in practical implementations. Second, if $M\ge 1$, we reject the null hypothesis and subsequently want to locate all the change points. In this case, we propose an estimation procedure and establish the corresponding asymptotic theory on the accuracy of the estimators. Finally, the above theoretical results are of general interest and could be used for a wider class of state-domain change point detection problems.

There is long-standing literature in statistics discussing jump detection of the time-domain nonlinear regression model \eqref{time-domain} where occasional jumps occur in an otherwise smoothly changing time trend $f(\cdot)$. It is impossible to show a complete reference here and we only list some representative works. \cite{Muller1992} and \cite{ES1994} employed a kernel method to estimate jump points in smooth curves. \cite{Wang1995} suggested using wavelets and provided a review of jump-point estimation. Two-step methods were considered by \cite{Muller1997} and \cite{Gijbels1999} to study the asymptotic convergence properties of the jumps. Later, \cite{Gijbel2007} suggested a compromise estimation method which can preserve possible jumps in the curve. \cite{Zhang2016} considered the situation where the trend function allows a growing number of jump points. 
In econometrics, there is a significant body of literature discussing time-domain jump detection in jump diffusion models; see for instance \cite{bollerslev2008risk,jiang2008testing,lee2012jumps} and the references therein.
On the other hand, it is well known that state-domain asymptotic theory is very different from that of the time domain (see, for instance \cite{Fan03}, Chapter 6). In our specific case, uniform asymptotic behavior of our test statistic on $[l,u]$ is arguably more difficult to establish than the corresponding problem in the time domain. 
In the current paper, we establish that, unlike time-domain change point detection problems of \eqref{time-domain} where the long-run variances of the process are of crucial importance in the asymptotics, state-domain change point detection theory of \eqref{new-model} heavily depends on the conditional variances and densities of the process $\{X_i\}$. We also provide an estimation procedure using simulated critical values to detect and locate the change points. We show that, when the jump sizes have a fixed and positive lower bound, the method will asymptotically detect all the change points with a preassigned probability and  an accuracy $c_n$ which is much smaller than $1/\sqrt{n}$, where $n$ is the length of the time series.
	
	The rest of the paper is organized as follows. In \cref{sec2}, we introduce the model framework and some basic assumptions. \cref{sec3} contains our main results, including a nonparametric test to determine the existence of change points and a procedure for estimating the number of change points together with their locations. Practical implementation based on a bootstrap procedure and a suitable method for bandwidth selection are discussed in \cref{sec4}. \cref{sec5} reports some simulation studies. A real data application of daily COVID-19 infections in Germany is carried out in \cref{sec6}. \cref{sec7} contains all the proofs of the theoretical results in \cref{sec3}.

\section{Model Formulation and Basic Assumptions}\label{sec2}

Throughout this paper, we use the following notations. A random vector $X \in\mathcal{L}^p$ if $\|X\|_p:=(\EE|X|^p)^{1/p}<\infty$. For two random variables $U$ and $V$, $F_{U\,|\,V}(\cdot)$ denotes the conditional distribution function of $U$ given $V$ and $f_{U\,|\,V}(\cdot)$ denotes the conditional density. Furthermore, for function $g$ with $\EE|g(U)|<\infty$, we let $\EE(g(U)\,|\,V):=\int g(x)\dee F_{U\,|\,V}(x)$ be the conditional expectation of $g(U)$ given $V$. Finally, $\mathbbm{1}$ stands for the indicator function.

Assume that the process $\{\epsilon_i\}$ is stationary and causal. Following \cite{Wu2005}, we assume that $\{\epsilon_i\}$ is a Bernoulli shift process such that
\[
\epsilon_i=G^*(\xi_i),\label{xi}
\]
where the function $G^*$ is a measurable function such that the process $\{\epsilon_i\}$ exists and $\xi_i=(\cdots,\eta_{i-1},\eta_i)$ is a shift process, where $\{\eta_i\}$ are independent and identically distributed (i.i.d.) random variables. Furthermore, $\{\epsilon_i\}$ is a martingale difference sequence satisfying $\EE[\epsilon_i\mid (\epsilon_{i-1},\epsilon_{i-2},\cdots)]=0$ almost surely. From \cref{xi}, one can interpret the transform $G^*$ as the underlying physical mechanism, with $\xi_i$ and $G^*(\xi_i)$ being the input and output of the system, respectively. 

Similarly, we assume
	\[
X_i=G(\xi_i),
	\]
	where $G$ is a measurable function such that $X_i$ exists. To facilitate the main results, we first introduce the time series dependence measures in \cite{Wu2005} associated with $X_i$ and $\epsilon_i$. Assume $X \in\mathcal{L}^p$, and let
\[
X_n'=G(\xi_n'), \quad \xi_n':=(\xi_{-1},\eta_0',\eta_1,\dots,\eta_n),
\]
where $X_n'$ is a coupled process of $X_n$ with $\eta_0$ replaced by an i.i.d.~copy $\eta_0'$. Then, we define the physical dependence measures of $X_i$ as
\[
\theta_{n,p}=\|X_n-X_n'\|_p.
\]
Let $\theta_{n,p}=0$ if $n<0$. Thus for $n\ge 0,~\theta_{n,p}$ measures the dependence of the output $G(\xi_n)$ on the single input $\eta_0$. We refer to \cite{Wu2005} for more details on the physical dependence measures.

Similarly, we define the physical dependence measures for the errors as
\[
\theta^*_{n,p}=\|\epsilon_n-\epsilon_n'\|_p,
\]
where $\epsilon_n'=G^*(\xi_n')$. Let $\theta^*_{n,p}=0$ if $n<0$.

Suppose that $\{X_i\}_{i=1}^n$ is observed. Recall $H_0: M=0$ and we aim to test the null hypothesis that the regression function is smooth. To this end, we introduce a density-weighted anti-symmetric kernel function $\tilde{K}_n$, which is defined by

\[
\tilde{K}_n(X,x,b):=\frac{w_n^*(x,b)K\left(\frac{X-x}{b}\right)-w_n(x,b)K^*\left(\frac{X-x}{b}\right)}{w_n(x,b)w_n^*(x,b)},\label{tildek}
\] 
where $K(\cdot)$ is a kernel function supported on $S=[0,1]$ with $\int_S K(u)\dee u=1$ and $K^*(u):=K(-u)$. The data-dependent weights $w_n(x,b)$ and $w_n^*(x,b)$ are defined by
\[
w_n(x,b):=\frac{1}{nb}\sum_{i=1}^n K\left(\frac{X_i-x}{b}\right),\quad w_n^*(x,b):=\frac{1}{nb}\sum_{i=1}^n K^*\left(\frac{X_i-x}{b}\right),
\]
where $b=b_n$ is the bandwidth satisfying $b\to 0$ and $nb\to\infty$. Note that $w_n(x,b)$ and $w_n^*(x,b)$ are one-sided kernel density estimators. Hence $\tilde{K}_n(X,x,b)$ can be approximated by $[K(\frac{X-x}{b})-K^\ast(\frac{X-x}{b})]/f(x)$, where $f(x)$ is the density function of $X_i$. Observing that $K(x)-K^\ast(x)$ is an anti-symmetric function, we then call $\tilde{K}_n(X,x,b)$ a density-weighted anti-symmetric kernel function. By sliding this kernel function $\tilde{K}_n$ through the state domain, we are able to test whether $\mu(x)$ has change points. More specifically, the quantity $\sum_{k=2}^n\tilde{K}_n(X_{k-1},x,b)X_k/{nb}$ is a boundary kernel estimation of $\mu(x^{+})-\mu(x^{-})$, where $\mu(x^{+})$ and $\mu(x^{-})$ are the right and left limits of $\mu(\cdot)$ at $x$. Thus, if $x$ is a continuous point of $\mu(\cdot)$, this quantity will be approximately zero at $x$. However, if $\mu(\cdot)$ is discontinuous at $x$, the quantity will be approximately equal to the jump size of $\mu(\cdot)$ at $x$. To establish our first main result, we need the following regularity conditions:

\begin{enumerate}[label=(\alph*),ref={Condition~(\alph*)}]
	\item \label{c1} There exist $0<\delta_2\le \delta_1<1$ such that $n^{-\delta_1}=\bigO(b)$ and $b=\bigO(n^{-\delta_2})$. 
	\item \label{c2} Assume $\EE|\epsilon_i|^p<\infty$ where $p>2/(1-\delta_1)$. 
	\item \label{c3} For the same $p$ defined in \ref{c2}, assume that $X_i\in \mathcal{L}^p$, $\theta_{n,p}=\bigO(\rho^n)$, and $\theta^*_{n,p}=\bigO(\rho^n)$ for some $0<\rho<1$.
	\item \label{c4} The density function $f$ of $X_i$ is positive on $[l-\epsilon,~u+\epsilon]$ for some $\epsilon>0$ and there exists a constant $B<\infty$ such that
	\[
	\sup_x \left[ |f_{X_n\,|\,\xi_{n-1}}(x)|+|f'_{X_n\,|\,\xi_{n-1}}(x)|+|f''_{X_n\,|\,\xi_{n-1}}(x)| \right]\le B,~\textrm{a.s}.
	\]
	\item \label{c5} $K(\cdot)$ is differentiable over $(0,1)$, the right derivative $K'(0+)$ and the left derivative $K'(1-)$ exists and  $\sup_{0\le u\le 1}|K'(u)|<\infty$. The Lebesgue measure of the set $\{u\in [0,1]: K(u)=0\}$ is zero. Furthermore, $K(0)=K(1)=0,~K'(0)>0$ and $\int_{0}^1uK(u)\dee u=0$. 
\end{enumerate}
For the above regularity conditions, \ref{c1} specifies the allowable range of the bandwidth. \ref{c2} puts a mild moment restriction on $\epsilon_i$. \ref{c3} requires that the quantities $\theta_{n,p}$ and $\theta_{n,p}^*$ satisfy the geometric moment contraction (GMC) property. The GMC property is preserved in many linear and nonlinear time series models such as the ARMA models and the ARCH and GARCH models; see \cite{Shao2007} for more discussions. Furthermore, denote $\Theta_n:=\sum_{i=0}^n\theta_{i,2}$, which measures the cumulative dependence of $X_0,...,X_n$ on $\eta_0$. Then if \ref{c3} holds, it is easy to see that  $\Theta_\infty<\infty$ which indicates short-range dependence of $\{X_i\}$. With \ref{c4}, we require that the density and conditional density of $X_i$ exist and are bounded. Moreover, $f$ has bounded derivatives up to the second order. \ref{c5} puts some restrictions on the smoothness and order of the kernel function $K$. In particular, $\int_{0}^1uK(u)\dee u=0$ indicates that $K$ is a second-order kernel which has both positive and negative parts on $[0,1]$.

\section{State-domain Change Point Detection and Estimation}\label{sec3}
In this section, we propose a test on the existence of change points in $\mu(\cdot)$ and an algorithm to estimate the number and locations of the change points when $\mu(\cdot)$ is discontinuous.

\subsection{Test for the existence of change points.}
With the foregoing discussion, we introduce a nonparametric statistic based on the density-weighted anti-symmetric kernel to test whether model \cref{new-model} has change points in the state domain regression function $\mu(\cdot)$ on $[l,u]$.
By proper scaling, our test statistic is defined as 
\[
t_n(x):=\frac{\sqrt{f(x)}}{\sigma(x)}\frac{1}{nb}\sum_{k=1}^n \tilde{K}_n\left(X_{k-1},x,b\right)X_k,
\]
where $\sigma^2(x)=\EE[\epsilon_i^2|X_{i-1}=x]$. In practice, since $f(\cdot)$ and $\sigma(\cdot)$ are unknown, we use the kernel density estimator $f_n(x)$ and Nadaraya--Watson (NW) estimator $\sigma^2_n(x)$ to replace $f(x)$ and $\sigma^2(x)$, respectively. The kernel density estimator is given by
\[
f_n(x)=\frac{1}{nh}\sum_{k=2}^n W\left(\frac{X_{k-1}-x}{h}\right),
\label{density}
\]
where $W(\cdot)$ is a general kernel function with $W(\cdot)\ge 0$ and $\int W(u)\dee u =1,~h=h_n$ is the bandwidth sequence satisfying $h\to 0$ and $nh\to \infty$. Let $\hat{e}_k^2=[X_k-\mu_n(X_{k-1})]^2$ be the square of the estimated residuals, where $$\mu_n(x)=\frac{1}{nh f_n(x)}\sum_{k=2}^n W\left(\frac{X_{k-1}-x}{h}\right)X_k$$ is the NW estimator of $\mu(\cdot)$, then the NW estimator of $\sigma^2(x)$ is given by
\[
\sigma^2_n(x)=\frac{1}{nh f_n(x)}\sum_{k=2}^n W\left(\frac{X_{k-1}-x}{h}\right)\hat{e}_k^2.\label{variance}
\]
The following remark provides the uniform consistency of the estimated density and conditional variance functions. 
\begin{remark}\label{remark1}
	Under \ref{c1} for both bandwidths $h$ and $b$ with $0<\delta_1<1/4$, \ref{c3}, \ref{c4}, and \ref{c5}, we have
	\[\label{eq_fn_bias}
	\EE f_n(x) -f(x)=f''(x)h^2\psi_W+o(h^2),
	\]
	where $\psi_W:=\int u^2 W(u)\dee u/2$
	and 
	\[
	\sup_x \left|f_n(x)-f(x)\right|=\bigO_{\Pr}\left(\frac{(\log n)^3}{\sqrt{nh}}+h^2\log n\right).
	\]
	Similarly, for $\sigma^2_n(x)$, under the conditions of
	\cref{thm_main}, we also have
	\[
	\sup_x \left|\sigma_n^2(x)-\sigma^2(x)\right|=\bigO_{\Pr}\left(\frac{(\log n)^3}{\sqrt{nh}}+h^2\log n\right).
	\]
See \cref{proof_remark1} for the proof.
\end{remark}

Let 
$f_{\epsilon}(\cdot)$ be the density function of $\epsilon_i$ and $\lambda_K=\int K^2(x)\dee x$. We have the following main result on the asymptotic properties of the proposed test statistic.

\begin{theorem}\label{thm_main}
	 Let $l,u\in\mathbb{R}$ be fixed. Recall the piece-wise formulation of \cref{e2}, let $T_j^{\epsilon}$ and $T^\epsilon$ be the $\epsilon$-neighborhoods of the intervals $T_j=[a_j,a_{j+1})$ and $T=[l,u]$, respectively. Let $T_a=\{a_j\}$ be the collection of the change points, $T_a^{\epsilon}$ be the $\epsilon$-neighborhood of $T_a$. Assume that \ref{c1}-\ref{c5} hold with $f_{\epsilon}(\cdot),~\sigma(\cdot) \in\mathcal{C}^3(T^{\epsilon}),~\mu_j(\cdot) \in \mathcal{C}^3(T_j^{\epsilon})$ for some $\epsilon>0$ and $b$ satisfies
	\[
	0<\delta_1<1/3,\quad 0<\delta_2\le 1/4,\quad nb^9\log n=o(1),
	\] 
	then
	\[
	\Pr\left(\sqrt{\frac{nb}{2\lambda_K}}\sup_{x\in T\cap (T_a^{b})^c} \left| t_n(x)\right| -d_n\le \frac{z}{(2\log \bar{b}^{-1})^{\frac{1}{2}}} \right)\to e^{-2e^{-z}},
	\] 
	where $\bar{b}:=b/(u-l)$ and
	\[
	d_n:=
	(2\log \bar{b}^{-1})^{\frac{1}{2}}+\frac{1}{(2\log \bar{b}^{-1})^{\frac{1}{2}}}\log\frac{\sqrt{K_2}}{\sqrt{2}\pi}
	\]
	with $K_2:=\int_{0}^{1} (K'(u))^2\dee u/\lambda_K$.
\end{theorem}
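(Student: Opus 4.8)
The plan is to proceed in three stages: (i) reduce $t_n(x)$, uniformly over $x\in T\cap(T_a^{b})^c$, to a normalized martingale transform of the errors; (ii) couple the resulting process, sampled on a fine grid, with a Gaussian process of matching covariance via an $m$-dependence approximation, a big-block/small-block decomposition, and the multivariate Gaussian approximation of \cite{Zaitsev1987}; and (iii) read off the Gumbel limit from the classical extreme-value theory for smooth stationary Gaussian processes, with $K_2$ entering as the relevant second spectral moment.

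\emph{Step 1 (linearization).} An exact algebraic identity gives $\frac{1}{nb}\sum_k\tilde K_n(X_{k-1},x,b)X_k=\mu_n(x^{+})-\mu_n(x^{-})$, the difference of the right and left one-sided Nadaraya--Watson smoothers built from $K$ and $K^{*}$, so that $t_n(x)=\frac{\sqrt{f_n(x)}}{\sigma_n(x)}\big(\mu_n(x^{+})-\mu_n(x^{-})\big)$. Substituting $X_k=\mu(X_{k-1})+\epsilon_k$ splits this into a deterministic part driven by $\mu$ and a stochastic part driven by $\{\epsilon_k\}$. For $x\in T\cap(T_a^{b})^c$ the kernel window avoids every change point, so $\mu$ is $\mathcal C^{3}$ there; using $K(0)=K(1)=0$, $\int_0^1 uK(u)\dee u=0$, and the cancellation of the leading bias between the two one-sided smoothers, the deterministic part is, uniformly, of strictly smaller order than $(nb)^{-1/2}(\log n)^{-1/2}$ under the stated bandwidth restrictions. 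Replacing $f_n,\sigma_n$ by $f,\sigma$ and the data-dependent weights $w_n,w_n^{*}$ by $f$ incurs only errors controlled by \cref{remark1} and the analogous uniform expansion for $\mu_n$; after multiplication by $\sqrt{nb}$ all of these are $o_{\Pr}\big((\log n)^{-1/2}\big)$. Hence it suffices to treat
\[
Z_n(x):=\frac{1}{\sqrt{2\lambda_K\,nb}}\,\frac{1}{\sqrt{f(x)}\,\sigma(x)}\sum_{k=1}^n\phi\!\left(\frac{X_{k-1}-x}{b}\right)\epsilon_k,\qquad \phi:=K-K^{*},
\]
a martingale transform with $\EE Z_n(x)^{2}=1+o(1)$ and $\Cov\big(Z_n(x),Z_n(y)\big)=r\big((y-x)/b\big)+o(1)$, where $r(v):=\frac{1}{2\lambda_K}\int\phi(u)\phi(u-v)\dee u$ vanishes for $|v|\ge 2$ and satisfies $r(v)=1-\tfrac12 K_2 v^{2}+o(v^{2})$ as $v\to 0$, since $\int\phi^{2}=2\lambda_K$ and $\int(\phi')^{2}=2\int_0^1(K'(u))^{2}\dee u$.

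\emph{Step 2 (Gaussian approximation).} Discretize $T$ by a grid of mesh of order $b/\log n$; a chaining bound using the smoothness of $\phi$ shows the maximal oscillation of $Z_n$ between grid points is $o_{\Pr}\big((\log n)^{-1/2}\big)$, so the suprema over $T$ and over the grid agree to the needed precision. On the grid, replace $\{\epsilon_k\}$ by an $m$-dependent approximation with $m$ of order $\log n$, legitimate because \ref{c3} forces geometric decay of $\theta^{*}_{n,p}$; then decompose $1,\dots,n$ into alternating big blocks and small blocks of suitable polynomial and logarithmic lengths, so that the big-block contributions form a row-independent array of $\mathbb R^{N}$-vectors indexed by the grid, and apply \cite{Zaitsev1987} to couple this vector of big-block sums with a centred Gaussian vector of the same covariance. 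The grid size $N$ is polynomial in $n$ and the moment assumption \ref{c2} is exactly what makes Zaitsev's bound negligible at the $(\log n)^{-1/2}$ scale; this is where the conditions $0<\delta_1<1/3$, $\delta_2\le 1/4$, $nb^{9}\log n=o(1)$ are consumed. Because $f$ and $\sigma$ vary over $T$, the limiting Gaussian field is only locally stationary, which is handled by partitioning $[l,u]$ into $\bigO(\log n)$ subintervals on which $f,\sigma$ are constant up to $o\big((\log n)^{-1}\big)$ and arguing on each. The outcome is that $\sup_{x\in T\cap(T_a^{b})^c}|Z_n(x)|$ has, up to $o_{\Pr}\big((\log n)^{-1/2}\big)$, the same law as $\sup_{x\in T}|\mathcal G(x)|$ for a unit-variance, $2b$-dependent Gaussian process $\mathcal G$ which, in the rescaled argument $x/b$, is stationary with correlation $r$; removing the $\bigO(b)$-neighbourhoods $T_a^{b}$ deletes only a vanishing fraction of the rescaled length and leaves the extreme unchanged.

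\emph{Step 3 (extreme-value asymptotics).} Apply the Gumbel limit theorem for the maximum of a smooth stationary Gaussian process (Leadbetter, Lindgren and Rootz\'en; equivalently the Bickel--Rosenblatt method), whose inputs are $\mathrm{Var}=1$, the local expansion $r(v)=1-\tfrac12 K_2 v^{2}+o(v^{2})$ and the (here $2b$-) short-range dependence. By Rice's formula the mean number of upcrossings of a high level $u$ per unit of the $x/b$-scale is proportional to $\sqrt{K_2}\,e^{-u^{2}/2}$; integrating over the rescaled length $(u-l)/b=\bar b^{-1}$, and invoking the asymptotic independence of $\{\sup_T\mathcal G>u\}$ and $\{\inf_T\mathcal G<-u\}$ — which is what turns $e^{-e^{-z}}$ into $e^{-2e^{-z}}$ — solving the resulting level equation for $u=d_n+z\,(2\log\bar b^{-1})^{-1/2}$ produces exactly the centering $d_n$ in the statement. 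Combining with Steps 1--2 transfers the limit to $\sqrt{nb/(2\lambda_K)}\sup_{x\in T\cap(T_a^{b})^c}|t_n(x)|$, which is the assertion.

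\emph{Main obstacle.} The delicate part is Step 2: constructing a Gaussian coupling that is uniform in $x$ and accurate to order $o_{\Pr}\big((\log n)^{-1/2}\big)$ after scaling by $\sqrt{nb}$. Because the Gumbel centering $d_n$ is pinned down only modulo $(\log\bar b^{-1})^{-1/2}$, an approximation error that would be harmless for a pointwise CLT can shift the limiting parameter $z$ by an $\bigO(1)$ amount; reconciling the grid mesh, the $m$-dependence truncation, the block lengths, and the dimension/moment trade-off in \cite{Zaitsev1987} so that every remainder clears this bar is precisely what forces the bandwidth constraints and demands the most bookkeeping. A secondary nuisance is controlling the process near the endpoints of $[l,u]$ and near the excised neighbourhoods $T_a^{b}$, so that they contribute neither spurious large values nor a loss of extremal mass.
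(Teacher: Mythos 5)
Your proposal follows essentially the same route as the paper's proof: linearization of $t_n(x)$ into a bias term (killed by the anti-symmetric cancellation, which is where $nb^9\log n=o(1)$ is consumed) plus a martingale transform of the errors, then an $m$-dependent approximation and alternating big/small blocks feeding into Zaitsev's multivariate Gaussian coupling on a fine grid, and finally Bickel--Rosenblatt/Pickands extremal asymptotics for the Gumbel limit. The one point to reconcile is the covariance normalization: your Taylor expansion $r(v)=1-\tfrac12 K_2 v^2+o(v^2)$ makes the second spectral moment equal to $K_2$ and Rice's formula would then put $\log\bigl(\sqrt{K_2}/(2\pi)\bigr)$ into $d_n$, whereas the stated centering $\log\bigl(\sqrt{K_2}/(\sqrt{2}\pi)\bigr)$ corresponds to the paper's claimed expansion $\tilde r(s)=1-K_2|s|^2+o(|s|^2)$, so this factor of $\sqrt{2}$ must be tracked down before the asserted constant comes out.
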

\begin{proof}
	See \cref{proof_thm_main}.
\end{proof}
\cref{thm_main} is a general result which establishes the asymptotic theory of the test statistic. In practical implementation, we will use the density estimates $f_n(x)$ and variance estimates $\sigma_n(x)$ instead of $f(x)$ and $\sigma(x)$ to calculate $t_n(x)$ as discussed before. Therefore, we have the following corollary.
 
\begin{corollary}\label{coro}	
	Denote $t_n^\ast(x)=\frac{\sqrt{f_n(x)}}{\sigma_n(x)}\frac{1}{nb}\sum_{k=1}^n \tilde{K}_n\left(X_{k-1},x,b\right)X_k$. Under the conditions of \cref{thm_main} and further assume the bandwidth $h\le b$, then the asymptotic result of \cref{thm_main} holds for $t_n^\ast(x)$; this is
	\[
	\Pr\left(\sqrt{\frac{nb}{2\lambda_K}}\sup_{x\in T\cap (T_a^{b})^c} |t_n^\ast(x)| -d_n\le \frac{z}{(2\log \bar{b}^{-1})^{\frac{1}{2}}} \right)\to e^{-2e^{-z}}.
	\] 	
\end{corollary}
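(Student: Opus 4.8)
The plan is to derive \cref{coro} from \cref{thm_main} by showing that replacing $f(x)$ and $\sigma(x)$ by the estimators $f_n(x)$ and $\sigma_n(x)$ perturbs the test statistic only by a uniformly negligible \emph{multiplicative} factor -- negligible even after multiplication by the extreme-value normalization $(2\log\bar b^{-1})^{1/2}\sqrt{nb}$. First I would write $t_n^\ast(x)=R_n(x)\,t_n(x)$, where $R_n(x):=\sqrt{f_n(x)/f(x)}\,\cdot\,\sigma(x)/\sigma_n(x)$, so that the random sum $\tfrac{1}{nb}\sum_{k}\tilde{K}_n(X_{k-1},x,b)X_k$ is left untouched and the entire discrepancy is carried by $R_n-1$; set $\Delta_n:=\sup_{x\in T}|R_n(x)-1|$ and $\mathcal S:=T\cap(T_a^b)^c$.

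Next I would control $\Delta_n$ uniformly. By \ref{c4} and the continuity of $f$ we have $\inf_{x\in T}f(x)>0$, and likewise $\inf_{x\in T^\epsilon}\sigma(x)>0$ (positivity of $\sigma$ is needed for $t_n$ to be well defined, so it is in force throughout). By \cref{remark1}, both $\sup_x|f_n(x)-f(x)|$ and $\sup_x|\sigma^2_n(x)-\sigma^2(x)|$ are $\bigO_{\Pr}\big((\log n)^3/\sqrt{nh}+h^2\log n\big)$, in particular $o_{\Pr}(1)$, so with probability tending to one $f_n$ and $\sigma^2_n$ are also bounded away from $0$ on $T$. Elementary mean-value bounds for $a\mapsto\sqrt a$ and $a\mapsto a^{-1/2}$ on these bounded-away-from-zero ranges then give $\Delta_n=\bigO_{\Pr}\big((\log n)^3/\sqrt{nh}+h^2\log n\big)$.

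Then I would assemble the estimate. \cref{thm_main} gives $\sqrt{nb/(2\lambda_K)}\,\sup_{x\in\mathcal S}|t_n(x)|=d_n+\bigO_{\Pr}\big((\log\bar b^{-1})^{-1/2}\big)$, and since $d_n=(2\log\bar b^{-1})^{1/2}+\bigO\big((\log\bar b^{-1})^{-1/2}\big)$ this shows $\sqrt{nb/(2\lambda_K)}\,\sup_{x\in\mathcal S}|t_n(x)|=\bigO_{\Pr}\big((\log\bar b^{-1})^{1/2}\big)$. From $\bigl||t_n^\ast(x)|-|t_n(x)|\bigr|\le|R_n(x)-1|\,|t_n(x)|$ and taking the supremum over $\mathcal S$ we get $\bigl|\sup_{x\in\mathcal S}|t_n^\ast(x)|-\sup_{x\in\mathcal S}|t_n(x)|\bigr|\le\Delta_n\,\sup_{x\in\mathcal S}|t_n(x)|$; multiplying by $(2\log\bar b^{-1})^{1/2}\sqrt{nb/(2\lambda_K)}$ and inserting the two rates just obtained, the difference of the two centered, scaled statistics is $\bigO_{\Pr}\big((\log\bar b^{-1})^{1/2}\cdot\Delta_n\cdot(\log\bar b^{-1})^{1/2}\big)=\bigO_{\Pr}\big(\Delta_n\log\bar b^{-1}\big)$. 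Finally, \ref{c1} forces $b\ge c\,n^{-\delta_1}$, hence $\log\bar b^{-1}=\bigO(\log n)$, and \ref{c1} applied to $h$ makes $nh$ grow at a polynomial rate, which dominates every power of $\log n$; therefore $\Delta_n\log\bar b^{-1}=\bigO_{\Pr}\big((\log n)^4/\sqrt{nh}+h^2(\log n)^2\big)=o_{\Pr}(1)$.

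Finally I would conclude: writing $G_n^\ast:=(2\log\bar b^{-1})^{1/2}\big(\sqrt{nb/(2\lambda_K)}\,\sup_{x\in\mathcal S}|t_n^\ast(x)|-d_n\big)$ and $G_n$ for the same quantity with $t_n$ in place of $t_n^\ast$, the previous paragraph shows $G_n^\ast=G_n+o_{\Pr}(1)$; since \cref{thm_main} gives $G_n\to G$ in distribution with $G$ having the continuous distribution function $z\mapsto e^{-2e^{-z}}$, the converging-together (Slutsky) lemma yields $\Pr(G_n^\ast\le z)\to e^{-2e^{-z}}$ for every $z$, which is the assertion of \cref{coro}. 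The step I expect to be the main obstacle is exactly this interface: verifying that the uniform estimation rate for $R_n-1$ coming from \cref{remark1}, once amplified by the extreme-value normalization $(2\log\bar b^{-1})^{1/2}\sqrt{nb}$, remains $o_{\Pr}(1)$. This is precisely where the hypothesis $h\le b$ and \ref{c1} (applied to $h$) are used, and it calls for careful bookkeeping of the logarithmic factors from \cref{remark1} and from $\log\bar b^{-1}$ against the $\sqrt{nh}$ gain; all other steps are routine given \cref{thm_main} and \cref{remark1}.
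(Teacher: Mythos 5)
Your argument is correct and is essentially the proof the paper intends: the corollary carries no separate proof precisely because it is meant to follow from \cref{thm_main} together with the uniform consistency rates of \cref{remark1} by exactly this multiplicative-perturbation/Slutsky argument, with the extreme-value normalization $(2\log\bar{b}^{-1})^{1/2}\sqrt{nb}$ amplifying the estimation error only by an $O(\log n)$ factor that is dominated by the $\sqrt{nh}$ gain. The one refinement is that $\Delta_n$ should be a supremum over $T\cap(T_a^{b})^c$ rather than over all of $T$: under the alternative, $\mu_n$ and hence $\sigma_n^2$ need not be consistent inside the $b$-neighborhoods of the change points (the residuals $\hat e_k$ there are contaminated by the jumps), and the hypothesis $h\le b$ is imposed exactly so that the estimation windows for $x$ outside $T_a^{b}$ stay essentially clear of the jumps --- not for the final rate bookkeeping, where only \ref{c1} applied to $h$ is needed.
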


Note that in \cref{coro}, we have added the assumption $h\le b$ with the purpose of ensuring the consistency of $f_n(x)$ and $\sigma_n(x)$ on $T\cap (T_a^{b})^c$. When there is no change point in $\mu(\cdot)$, we have similar results as shown in the following remark, which suggests that under the null hypothesis, after proper scaling and centering, our test statistic converges to a Gumbel distribution asymptotically.

\begin{remark}\label{remark2}
	Assume $H_0: M=0$ holds. We further assume that $f(\cdot),~\sigma(\cdot) \in \mathcal{C}^3(T^{\epsilon})$ and the remaining conditions of \cref{coro} hold. Then, $T_a=\emptyset$, $T_a^b=\emptyset$, which implies $T\cap (T_a^{b})^c=T$. Therefore, the previous theorem reduces to
	\[
	\Pr\left(\sqrt{\frac{nb}{2\lambda_K}}\sup_{x\in T} |t_n^\ast(x)| -d_n\le \frac{z}{(2\log \bar{b}^{-1})^{\frac{1}{2}}} \right)\to e^{-2e^{-z}}. \label{target}
	\] 
\end{remark}

Denote the jump-size of $\mu(\cdot)$ at $a_i$ as $\Delta_i$; that is, $\Delta_i:=|\mu(a_i+)-\mu(a_i-)|$. Next, we consider the alternative hypothesis $H_a: M\ge 1$ with $\Delta_i\ge \tilde{\Delta}>0$. When $H_a$ holds true, it is easy to see that the proposed test has an asymptotic power $1$ as $n\to\infty$. In other words, with some preassigned level $\alpha\in(0,1)$ and as $n\to\infty$, we have
\[
\Pr\left(\sup_{x\in T}|t_n(x)|\ge\sqrt{\frac{2\lambda_K}{nb}} \left[d_n-\frac{\log\{\log(1-\alpha)^{-1/2}\}}{(2\log\bar{b}^{-1})^{1/2}}\right]\right)\to 1.
\]
Once the null hypothesis of no change point is rejected, one would be interested in detecting the number of change points together with their locations, which we discuss in \cref{3.2}.

\subsection{Change-point Estimation}\label{3.2}
Suppose there exist a fixed number $M$ of change points on $[l,u]$, which are denoted by $l<a_1<\dots<a_M<u$, with the minimum jump size $\min_{1\le i\le M}\Delta_i\ge \tilde{\Delta}_n>0$. In this paper, we assume $\tilde{\Delta}_n=\bigO(1)$ which is allowed to decrease with $n$. 
The idea for estimating the number and locations of the change points is to search for local maximas of $|t_n(x)|$ which exceed the critical value of the test. To be more specific, we propose in the following a procedure for change point estimation.

\begin{itemize}
	\item For a fixed level $\alpha$, perform the bootstrap procedure described in \cref{4.1} to determine the critical value, say $C_{n,\alpha}$.
	\item Set  $T_1:=(l,u)$. 
	\item Starting from the interval $T_1$, find the largest $x$ of $|t_n(x)|$ that exceeds the critical value and denote its location as $\hat{a}_{(1)}$, then rule out the interval $[\hat{a}_{(1)}-b, \hat{a}_{(1)}+b]$ from $T_1$ to get $T_2:=T_1\cap [\hat{a}_{(1)}-b, \hat{a}_{(1)}+b]^c$.
	\item Repeat the previous step until all significant local maximas are found. In other words, $|t_n(x)|$ on the remaining intervals are all below $C_{n,\alpha}$.
	\item Denote the number of detected change points by $\hat{M}$ and re-order the estimated change points as $l<\hat{a}_1<\dots<\hat{a}_{\hat{M}}<u$. 
\end{itemize}

The following theorem provides an asymptotic result for $\hat{M}$ and $\hat{a}_i$.
	\begin{theorem}\label{thm_estimation}
		Under the conditions of \cref{thm_main}, we further assume that $K'(\cdot)$ is differentiable over $(0,1)$ with $K'(1)=0$, the right derivative $K''(0+)$ and the left derivative $K''(1-)$ exist and  $\sup_{0\le u\le 1}|K''(u)|<\infty$. The Lebesgue measure of the set $\{u\in [0,1]: K'(u)=0\}$ is zero. 
If $\sqrt{\frac{\log n}{nb}}=o(\tilde{\Delta}_n)$ then 
for any given level $\alpha$, we have
		\[
		\Pr\left(\left\{\hat{M}=M\right\}\cap \left\{\max_{1\le i\le M} |\hat{a}_i-a_i|< c_n \right\}\right)\to 1-\alpha,
		\]
		for any $c_n$ such that $1/c_n=\bigO\left(\tilde{\Delta}_n\sqrt{\frac{n}{b\log n}}\right)$
	\end{theorem}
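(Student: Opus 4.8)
The plan is to decompose the statistic into a deterministic ``signal'' and a stochastic ``noise'' term, show that outside the $b$‑neighbourhoods of the change points the whole statistic stays below the threshold with probability $\to1-\alpha$ (so the algorithm creates no spurious points there), and show that inside each $b$‑neighbourhood of an $a_j$ the signal produces exactly one local maximum of $|t_n|$ above the threshold, located within $c_n$ of $a_j$. Concretely, since $\tfrac1{nb}\sum_k\tilde K_n(X_{k-1},x,b)X_k=\hat\mu_+(x)-\hat\mu_-(x)$ is the difference of the right‑ and left‑sided Nadaraya--Watson smoothers of $\mu$ (using the design points with $X_{k-1}\in[x,x+b]$, resp. $X_{k-1}\in[x-b,x]$), I would write $t_n(x)=\bar t_n(x)+\zeta_n(x)$ with $\bar t_n$ the deterministic main term and $\zeta_n$ the mean‑zero martingale‑type remainder. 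For the noise I reuse the machinery behind \cref{thm_main}: the discretized Gaussian approximation of \cite{Zaitsev1987} and the attached maximal inequality give $\sup_{x\in T}|\zeta_n(x)|=\bigO_{\Pr}(\sqrt{\log\bar b^{-1}/(nb)})$ and, on any window of length $\bigO(b)$, the sharper $\sup_{|x-a_j|\le b}|\zeta_n(x)|=\bigO_{\Pr}((nb)^{-1/2})$; running the same argument for the \emph{differentiated} statistic yields the companion bound $\sup_{x\in T}|\zeta_n'(x)|=\bigO_{\Pr}(\sqrt{\log\bar b^{-1}/(nb^{3})})$. For the signal, a boundary‑kernel expansion gives $\bar t_n(a_j+bt)=\tfrac{\sqrt{f(a_j)}}{\sigma(a_j)}\Delta_j\Psi(t)+\bigO(b)$ for $|t|\le1$, where $\Psi(t):=\int_{|t|\wedge1}^{1}K(u)\dee u$, while for $x$ farther than $b$ from every change point the bias estimate already contained in the proof of \cref{thm_main} gives $|\bar t_n(x)|=o(\sqrt{\log\bar b^{-1}/(nb)})$. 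By \ref{c5}, $\Psi$ is even with unique maximal absolute value $\Psi(0)=1$, $\Psi'(0)=-K(0)=0$, $\Psi''(0)=-K'(0)<0$, and by the added condition $K'(1)=0$ one gets the cubic flattening $\Psi(t)=\bigO((1-|t|)^3)$ as $|t|\uparrow1$.

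Fix the level $\alpha$. By \cref{coro} and the consistency of the bootstrap threshold (\cref{4.1}), the critical value may be taken so that $\Pr\big(\sup_{x\in T\cap(T_a^b)^c}|t_n(x)|>C_{n,\alpha}\big)\to\alpha$ and $C_{n,\alpha}$ is of exact order $\sqrt{\log\bar b^{-1}/(nb)}$; in particular $C_{n,\alpha}=o(\tilde\Delta_n)$ by the hypothesis $\sqrt{\log n/(nb)}=o(\tilde\Delta_n)$. Put $A_n:=\{\sup_{x\in T\cap(T_a^b)^c}|t_n(x)|\le C_{n,\alpha}\}$, so $\Pr(A_n)\to1-\alpha$ and on $A_n$ the algorithm records no change point outside $T_a^b=\bigcup_j(a_j-b,a_j+b)$ (the $(a_j-b,a_j+b)$ being pairwise disjoint for $n$ large). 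Let $B_n$ be the event that for every $j$: (i) $\max_{|x-a_j|\le b}|t_n(x)|>C_{n,\alpha}$; (ii) the maximiser $\hat a_j$ of $|t_n|$ over $(a_j-b,a_j+b)$ satisfies $|\hat a_j-a_j|<c_n$; and (iii) after deleting $[\hat a_j-b,\hat a_j+b]$ the two leftover slivers near $a_j\pm b$ still carry $|t_n|<C_{n,\alpha}$.

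I would verify $\Pr(B_n)\to1$ as follows. Claim (i): near $a_j$ the signal has size $\asymp|\Delta_j|\ge\tilde\Delta_n$, which dominates both $C_{n,\alpha}$ and $\sup|\zeta_n|$. Claim (ii): on the bump $t_n=\bar t_n+\zeta_n$ keeps the sign of $\Delta_j$, so $\hat a_j$ is in fact the maximiser of $t_n$; combining $t_n(\hat a_j)\ge t_n(a_j)$ with the quadratic lower bound $\bar t_n(a_j)-\bar t_n(x)\gtrsim\tilde\Delta_n(x-a_j)^2/b^2$ (from $\Psi''(0)<0$) and the increment bound $|\zeta_n(\hat a_j)-\zeta_n(a_j)|\le|\hat a_j-a_j|\sup_x|\zeta_n'(x)|$ and balancing the two gives
\[
|\hat a_j-a_j|=\bigO_{\Pr}\!\Big(\tfrac{b^2}{\tilde\Delta_n}\,\sup_x|\zeta_n'(x)|\Big)=\bigO_{\Pr}\!\Big(\tfrac{1}{\tilde\Delta_n}\sqrt{b\log n/n}\Big),
\]
which is $<c_n$ for every $c_n$ with $1/c_n=\bigO(\tilde\Delta_n\sqrt{n/(b\log n)})$. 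Claim (iii): on such a sliver $\Psi$ is evaluated within $o(1)$ of $\pm1$, so by the cubic flattening the signal there is $o(\sqrt{\log n/(nb)})=o(C_{n,\alpha})$, while $\sup_{|x-a_j|\le b}|\zeta_n|=\bigO_{\Pr}((nb)^{-1/2})=o(C_{n,\alpha})$; hence $|t_n|<C_{n,\alpha}$ on the slivers.

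On $A_n\cap B_n$ the algorithm then records exactly one estimate $\hat a_j$ per bump (removing $[\hat a_j-b,\hat a_j+b]$, which covers $(a_j-b,a_j+b)$ except for the harmless slivers) and nothing else, so $\hat M=M$ and $\max_j|\hat a_j-a_j|<c_n$; conversely $\{\hat M=M\}\cap\{\max_j|\hat a_j-a_j|<c_n\}\subseteq A_n$, since any exceedance in $T\cap(T_a^b)^c$ would create a spurious change point. Combining, $\Pr\big(\{\hat M=M\}\cap\{\max_j|\hat a_j-a_j|<c_n\}\big)\to1-\alpha$. (If one runs the procedure with $t_n^\ast$ in place of $t_n$, the same conclusions follow because $\sup_x|t_n^\ast(x)-t_n(x)|$ is of smaller order than $C_{n,\alpha}$ by \cref{remark1}.) The genuinely technical step, and the main obstacle, is the derivative bound $\sup_x|\zeta_n'(x)|=\bigO_{\Pr}(\sqrt{\log\bar b^{-1}/(nb^{3})})$ and the localisation estimate (ii) built on it: this amounts to redoing, one derivative higher, the Zaitsev‑type Gaussian approximation and the chaining bound underlying \cref{thm_main}, which is exactly why \cref{thm_estimation} strengthens the smoothness requirements on $K$.
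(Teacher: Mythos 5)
Your proposal is correct and follows essentially the same route as the paper: the size/power dichotomy giving $\Pr(\hat M=M)\to1-\alpha$, the quadratic signal decay $(c/b)^2\,\Theta(\tilde\Delta_n)$ coming from $K(0)=0$, $K'(0)>0$, and a noise-increment bound of order $(c/b)\sqrt{\log n/(nb)}$ obtained by rerunning the Gaussian-approximation/blocking machinery on the differentiated kernel (hence the extra conditions on $K'$), balanced to give $|\hat a_j-a_j|=\bigO_{\Pr}\bigl(\tilde\Delta_n^{-1}\sqrt{b\log n/n}\bigr)$. The only differences are minor: you bound the noise increment by $|c|\sup_x|\zeta_n'(x)|$ where the paper uses a pointwise first-order expansion plus a second-moment bound on the remainder kernel $\check K$, and you explicitly dispose of the leftover slivers of $(a_j-b,a_j+b)$ after deleting $[\hat a_j-b,\hat a_j+b]$ — a detail the paper's proof passes over.
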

	\begin{proof}
		See \cref{proof_thm_estimation}.
	\end{proof}
\cref{thm_estimation} reveals that for any given small probability $\alpha$, with asymptotic probability $1-\alpha$, our proposed procedure will correctly estimate all the change points with an accuracy $c_n$. It is important to mention that when $\tilde{\Delta}_n=\tilde{\Delta}>0$, that is, when the jump sizes have a fixed lower bound, 
the smallest order for $c_n$ is $\sqrt{b\log n/n}$, which is smaller than $n^{-1/2}$. It can also be seen as a product of $\sqrt{\log n}$ and the optimal convergence rate $(\sqrt{b/n})$ of time-domain change-point estimators established in \cite{Muller1992}. Hence, we conjecture that our rate $c_n$ is nearly optimal for state-domain change point detection. 
  
\section{Practical Implementation}\label{sec4}
\subsection{The bootstrap procedure}\label{4.1}
It is well known that the convergence rate of the Gumbel distribution in \cref{thm_main} is slow. As a result, a very large sample size would be needed for the approximation to be reasonably accurate. To overcome this issue, we propose a simulation-based bootstrap procedure to improve the finite-sample performance of the proposed test. The bootstrap procedure is as follows.

\begin{itemize}
	\item Generate i.i.d. standard normal random variables $U_k,~k=0,...,n$.
	\item Compute the quantity $\Pi_n^\ast$ defined in \cref{boots} for many times and calculate its $(1-\alpha)$th quantile as the critical value of our test.
\end{itemize}
For the proposed boostrap procedure, we have the following theoretical results which shows that, with proper scaling and centering, $\Pi_n^\ast$ has the same asymptotic Gumbel distribution.

 \begin{proposition} \label{prop}
	Denote $\Pi_n=\sup_{x \in T}|t_n^\ast(x)|$ and 
	\[
	\Pi_n^\ast=\sup_{x \in T}\left|\frac{\sqrt{g(x)}}{nb}\sum_{k=1}^n\tilde{K}_n(U_{k-1},x,b)U_k\right|,
	\label{boots}
	\]
	where $\{U_k\}_{k=0}^n$ are i.i.d. standard normal random variables and $g(x)$ is its density. Assume $H_0: M=0$, \ref{c1}, \ref{c5} hold and $b$ satisfies \[
	0<\delta_1<1/3,\quad 0<\delta_2\le 1/4,\quad nb^9\log n=o(1).
	\] 
	Then we have 
\begin{equation}\label{bootstrap}
\Pr\left(\sqrt{\frac{nb}{2\lambda_K}}\Pi_n^\ast-d_n\le \frac{z}{(2\log \bar{b}^{-1})^{\frac{1}{2}}} \right) \to {\rm e}^{-2{\rm e}^{-z}},~\text{as $n\to \infty$}.
\end{equation}
\end{proposition}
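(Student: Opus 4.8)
The plan is to observe that $\Pi_n^\ast$ is exactly the statistic $\sup_{x\in T}|t_n(x)|$ of \cref{thm_main}, evaluated on the artificial sample $U_0,\dots,U_n$ regarded as a realization of the state-domain model \eqref{new-model} with the trivial regression function $\mu\equiv 0$ and innovations $\epsilon_k=U_k$. For this artificial model there are no change points, so $M=0$, $T_a=\emptyset$, hence $T_a^b=\emptyset$ and $T\cap(T_a^b)^c=T$. Moreover, since the $U_k$ are independent, $\sigma^2(x)=\EE[U_k^2\mid U_{k-1}=x]=1$, and the stationary density of $U_{k-1}$ is precisely $g$; consequently
$\tfrac{\sqrt{g(x)}}{nb}\sum_{k=1}^n\tilde K_n(U_{k-1},x,b)U_k$
coincides with $t_n(x)$ for this model. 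Therefore \eqref{bootstrap} is nothing but the conclusion of \cref{thm_main} specialized to this instance, and the whole argument reduces to checking that the artificial model meets the hypotheses of \cref{thm_main}.

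That verification is where the (very light) work lies, and I would carry it out condition by condition. \ref{c1}, \ref{c5} and the bandwidth restrictions $0<\delta_1<1/3$, $0<\delta_2\le 1/4$, $nb^9\log n=o(1)$ are assumed in the proposition. \ref{c2} holds because a standard normal has finite moments of every order; similarly $U_k\in\mathcal L^p$ for all $p$, and since the $U_k$ are i.i.d.\ the physical dependence measures vanish, $\theta_{n,p}=\theta^\ast_{n,p}=0$ for all $n\ge 1$ (and are bounded for $n=0$), hence are $\bigO(\rho^n)$ for any $\rho\in(0,1)$, giving \ref{c3}. For \ref{c4}, $g$ is strictly positive on all of $\Reals$, in particular on $[l-\epsilon,u+\epsilon]$, and $g,g',g''$ are uniformly bounded; the conditional density $f_{U_n\mid\xi_{n-1}}$ equals the unconditional density $g$ by independence, so the same bound $B$ holds a.s.\ for it and its first two derivatives. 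Finally the extra smoothness required by \cref{thm_main} holds trivially: $f_\epsilon=g\in\mathcal C^3(T^\epsilon)$, $\sigma\equiv 1\in\mathcal C^3(T^\epsilon)$, and $\mu_0\equiv 0\in\mathcal C^3(T^\epsilon)$. Note that, in contrast with \cref{coro}, no estimation of the density or of the conditional variance enters $\Pi_n^\ast$: it uses the exact $g$ and the exact constant variance, so \cref{remark1} is not needed and one applies \cref{thm_main} directly rather than through its corollary (in particular, no assumption linking an auxiliary bandwidth $h$ to $b$ is needed).

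There is no genuine obstacle here; the only point that warrants care is making sure the reduction to \eqref{new-model} is legitimate and that \ref{c4} is read correctly for an i.i.d.\ sequence, namely that the conditioning $\sigma$-field $\xi_{n-1}$ carries no information about $U_n$ so that the conditional law is simply $N(0,1)$. With these checks in place, \eqref{bootstrap} follows immediately. Alternatively, one may re-run the proof of \cref{thm_main} in this special case, where it simplifies considerably --- the multivariate Gaussian approximation step of \cite{Zaitsev1987} becomes vacuous because $\sum_k\tilde K_n(U_{k-1},x,b)U_k$ is already a linear functional of Gaussians, leaving only the extreme-value analysis of the resulting Gaussian field on $[l,u]$ to produce the Gumbel limit --- but invoking \cref{thm_main} as a black box is the cleaner route.
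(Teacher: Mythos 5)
Your proposal is correct, and your primary route differs slightly from the paper's. The paper's own proof is a one-liner: since the $U_k$ are i.i.d.\ standard Gaussian, essentially all of the machinery of the proof of \cref{thm_main} ($m$-dependent approximation, big/small blocking, the Zaitsev Gaussian approximation) is unnecessary, and one applies the extreme-value theorem of \cite{Rosenblatt1976} directly to the resulting Gaussian kernel field --- which is precisely the ``alternative'' you mention at the end and set aside. Your preferred route instead treats $U_0,\dots,U_n$ as a realization of \eqref{new-model} with $\mu\equiv 0$, $\epsilon_k=U_k$, $\sigma\equiv 1$, $f=g$, verifies \ref{c2}--\ref{c4} and the smoothness hypotheses (all trivially satisfied: vanishing physical dependence measures for $n\ge 1$, all moments finite, conditional density equal to $g$), and then invokes \cref{thm_main} under $H_0$ as a black box; you also correctly note that $\Pi_n^\ast$ uses the exact $g$ and exact unit variance, so \cref{thm_main} rather than \cref{coro} is the right statement to cite and no auxiliary bandwidth $h$ is needed. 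Your reduction buys a fully explicit, checkable argument given that \cref{thm_main} is already established, at the cost of routing a degenerate i.i.d.\ case through heavy machinery; the paper's route is shorter and exploits the exact Gaussianity, but leaves the verification implicit. Both are valid.
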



Proposition \ref{prop} shows that $\Pi_n^\ast$ and $\Pi_n$ have the same asymptotic Gumbel distribution with proper scaling and centering under the null hypothesis. Therefore, the $(1-\alpha)$th quantile of $\Pi_n$ can be estimated consistently by calculating the empirical $(1-\alpha)$th quantile $C_{n,\alpha}$ of $\Pi_n^\ast$ with a large number of replications by the bootstrap procedure. We reject the null hypothesis at level $\alpha\in(0,1)$ if $\Pi_n>C_{n,\alpha}$. When implementing the procedure described in \cref{3.2} for estimating the change points, we also suggest using $C_{n,\alpha}$ to find the detection region. Our numerical experiments suggest that the bootstrap method yields more accurate results than those based on the asymptotic limiting distribution under small or moderate sample sizes.

\subsection{Bandwidth selection}	
The bandwidth used in $f_n(x)$ can be chosen based on classic bandwidth selectors for nonparametric kernel density estimation. However, the choice of bandwidth $b$ for test statistic $t_n^\ast(x)$ and $h$ for the estimated variance  $\sigma_n^2(x)$ can be quite nontrivial and are usually of practical interest. In this paper, we adopt the standard leave-one-out cross-validation criterion for bandwidth selection suggested by \cite{Rice1991}: 
\[
{\rm CV}(b)&=\frac{1}{n}\sum_{k=1}^n\left[X_{k+1}-\mu_n^{(-k)}(X_k)\right]^2,\\
{\rm CV}(h)&=\frac{1}{n}\sum_{k=1}^n\left[(X_{k+1}-\mu_n(X_k))^2-\sigma_n^{2(-k)}(X_k)\right]^2
\]
where $\mu_n^{(-k)}(X_k)$ and  $\sigma_n^{2(-k)}(X_k)$ are the kernel estimators of $\mu$ and $\sigma^2$ computed with all measurements with the $k$th subject deleted, respectively. For example, a cross-validation bandwidth $\hat{b}$ can be obtained by minimizing ${\rm CV}(b)$ with respect to $b$, i.e., $\hat{b}=\mathop{\arg\min}_{b\in \mathcal{B}}{\rm CV}(b)$, where $\mathcal{B}$ is the allowable range of $b$. The bandwidth selection for $h$ is similar. 

\section{Simulation Study}\label{sec5}
In this section, we carry out Monte Carlo simulations to examine the finite-sample performances of our proposed test and estimator. Throughout the numerical experiments, the Epanechnikov kernel $W(x)=0.75(1-x^2)\mathbbm{1}(|x|\leq 1)$ is used for estimating the density and conditional variances. On the other hand, we adopt the higher-order kernel function in the form $K(x)=b[\tilde{W}(x)-a\tilde{W}(\sqrt{a}x)]$ in the expression of $\tK_n$, where $\tilde{W}(x)$ is the kernel function on [0,1] by shifting and scaling $W(x)$. From \cref{thm_main}, one can see that the power of our test increases as $\lambda_K$ decreases. As a result, we aim to maximize the quantity $Q(a,b)=\frac{\int_0^\infty K(x)\dee x}{\sqrt{\int_0^\infty K^2(x)\dee x}}$ with the constraints $\int_0^\infty K(x)\dee x=1$ and $\int_0^\infty xK(x)\dee x=0$ to choose $a$ and $b$. It turns out that $Q(a,b)$ is maximized at $a=0.34$ and $b=\frac{2}{\sqrt{0.34}-0.34}$. Hence, we will use $K(x)=\frac{2}{\sqrt{0.34}-0.34}[\tilde{W}(x)-0.34\tilde{W}
(\sqrt{0.34}x)]$ in our simulations and data analysis.

\subsection{Accuracy of bootstrap.}
We run Monte Carlo simulations to study the accuracy of the proposed bootstrap procedure for finite samples $n=200,~500$ and 800. Here, we aim to test the null hypothesis $H_0$ of no change point in the regression function. The number of replications is fixed to be 1000 and the number of bootstrap samples is $B=2000$ at each replication.

To guarantee the stationarity of the process $\{X_i\}$, $|\mu(x)|$ is required to be less than one \cite[Section 2.1]{Fan03}. First, we consider Model A listed below to investigate the robustness of our testing procedure with respect to various levels of persistence in the data generating process. Additional four state-domain nonlinear models (Models B - E listed below) where $\mu(\cdot)$ is of various shapes are further investigated for the accuracy of our test. In our simulations the martingale difference process $\epsilon_i=\sigma(X_{i-1})\epsilon_i^\ast$ with $\sigma^2(x)=\EE(\epsilon_i^2|X_{i-1}=x)$ and $\epsilon_i^\ast \stackrel{i.i.d.}{\sim}\mathcal{N}(0,1)$. Note that the error processes $\{\epsilon_i\}$ are specified via different conditional variance functions $\sigma^2(x)$ in Models A--D. On the other hand, in Model E we set {$\epsilon_i=0.5\eta_i(\eta_{i-7}+1.5)$} where $\eta_i\stackrel{i.i.d.}{\sim}\mathcal{N}(0,1)$ so that $\{\epsilon_i\}$ has a period of 7 which matches the data generating process observed in the empirical data example in Section \ref{sec6}. 

\begin{itemize}
	\item Model A: 
	\begin{align*}
		\mu(x)&=
		\begin{cases}
			\kappa_1 x^3, &|x|\leq 1,
			\cr \kappa_1, &x>1,
			\cr -\kappa_1, &x<-1,
		\end{cases}\\
		\sigma(x)&=1.5{\rm e}^{-0.5x^2},
	\end{align*}
where $\kappa_1=0.2, 0.4, 0.6, 0.8$ represents various levels of temporal dependence in the series.
	\item Model B:
	\begin{equation*}
		\mu(x)=0.2{\rm e}^{-0.5x^2},~\sigma(x)=\frac{1.5{\rm e}^{x}}{1+{\rm e}^{x}}.
	\end{equation*}
	\item Model C:
	\begin{align*}
		\mu(x)&=\frac{0.3{\rm e}^{x}}{1+{\rm e}^{x}}, \\
		\sigma(x)&=
		\begin{cases}
			0.7(1+x^2), &|x|\leq 1,
			\cr 1.4, &\text{otherwise}.
		\end{cases}
	\end{align*}
	\item Model D:
	\begin{equation*}
		\mu(x)=0.8\sin(x),~\sigma(x)=1.
	\end{equation*}
\item Model E:
\begin{equation*}
	\mu(x)=0.5\cos(x).
	\end{equation*}
\end{itemize}

Note that the regression functions $\mu(\cdot)$ in Models A--E are all continuous. At nominal significance levels $\alpha=0.05$ and $0.1$, the simulated Type I error rates for sample sizes $n=200, 500$ and $800$ are reported in Tables \ref{T1}--\ref{T2} for Model A and Models B--E, respectively. To measure the strength of the nonlinear temporal dependence, we will employ the auto-distance correlation function (ADCF) investigated in \cite{zhou2012measuring}. In Table \ref{T1}, we illustrate the first order ADCF (denoted by $\mathcal{R}(1)$) for Model A. Meanwhile, for Model E the first order and the seventh order ADCF are listed in Table \ref{T2}. One can see that the performance of our testing procedure is reasonably accurate for different sample sizes across the models and the accuracy improves as the sample size increases. On the other hand, from Table \ref{T1}, we find that as the dependence of the process becomes stronger, the type I errors tend to be less accurate, but are still in a reasonable range. 

\begin{table}[htbp!]
	\centering
	\caption{Simulated type I error rates for Model~A with the first order ADCF of the model.}
	\label{T1}
	\vspace{0.05in}
	\begin{tabular}{cccccccccc}
		\hline
		Model A&$\kappa_1$&0.2&0.4&0.6&0.8\\
		\hline
		&$\mathcal{R}(1)$&0.240&0.321&0.412&0.523\\
		\hline
		\multirow{3}{*}{$\alpha=0.05$}&$n=200$&0.046&0.055&0.064&0.067\\
		&$n=500$&0.048&0.052&0.060&0.065\\
		&$n=800$&0.053&0.049&0.050&0.065\\
		\hline
		\multirow{3}{*}{$\alpha=0.1$}&$n=200$&0.103&0.109&0.132&0.131\\
		&$n=500$&0.096&0.092&0.119&0.138\\
		&$n=800$&0.099&0.092&0.109&0.126\\
		\hline
	\end{tabular}
\end{table}

\begin{table}[htbp!]
	\centering
	\caption{Simulated type I error rates for Model~B--E with the first and seventh order ADCF of Model~E.}
	\label{T2}
	\vspace{0.05in}
	\begin{tabular}{cccccccccc}
		\hline
		&Model&B&C&D&\multicolumn{2}{c}{E}\\
		\hline
		\multirow{3}{*}{$\alpha=0.05$}&$n=200$&0.040&0.044&0.050
		&0.060&$\mathcal{R}(1)$\\
		&$n=500$&0.066&0.041&0.054
		&0.054&0.195\\
		&$n=800$&0.056&0.051&0.057&0.054\\
		\hline
		\multirow{3}{*}{$\alpha=0.1$}&$n=200$&0.085&0.106&
		0.124&0.105&$\mathcal{R}(7)$\\
		&$n=500$&0.102&0.092&0.114
		&0.092&0.258\\
		&$n=800$&0.093&0.101&0.112&0.095\\
		\hline
	\end{tabular}
\end{table}

\subsection{Power of hypothesis testing}
In this subsection, we consider the simulated power of our test under various alternatives. Recall the representation $\epsilon_i=\sigma(X_{i-1})\epsilon_i^\ast$ with $\epsilon_i^\ast \stackrel{i.i.d.}{\sim}\mathcal{N}(0,1)$. Here, we consider the
following two types of alternatives with a change point of size $\delta$ : 
\begin{itemize}
\item Model F$_1$:
\[\mu(x)&=
\begin{cases}
0.5{\rm e}^{-x^2}, &x<0,
\cr 0.5{\rm e}^{-x^2}-\delta, &x\geq0,
\end{cases}\\
\sigma(x)&={\rm e}^{-0.5x^2}.\]
\item Model F$_2$:
\[\mu(x)&=
\begin{cases}
0.3-\delta, &x<0,
\cr 0.3, &x\geq0,
\end{cases}\\
\sigma(x)&=\frac{{\rm e}^x}{1+{\rm e}^x}.\]
\end{itemize}

\begin{figure}[htbp!]
	\centering
	\includegraphics[scale=0.6]{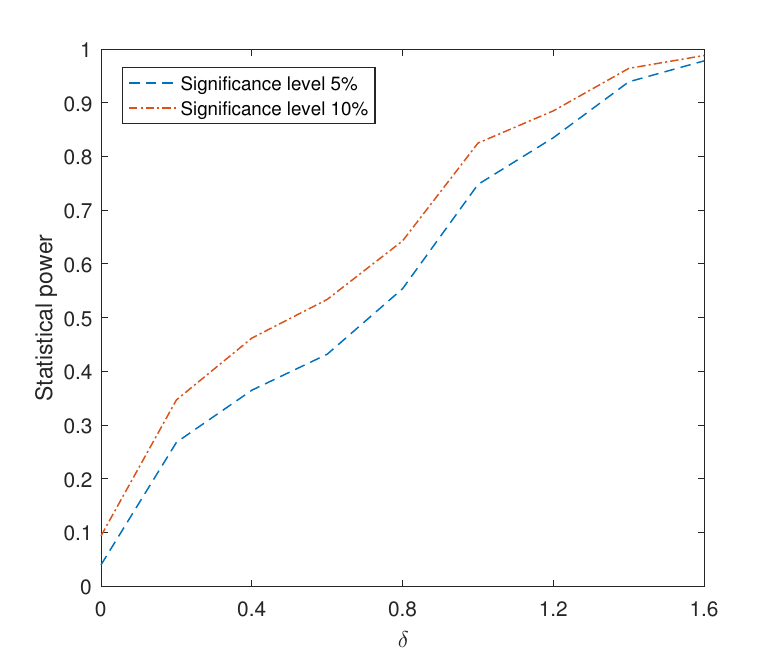}
	\caption{Simulated rejection rates for testing change point for Model F$_1$.}\label{F1}
\end{figure}

\begin{figure}[htbp!]
	\centering
	\includegraphics[scale=0.6]{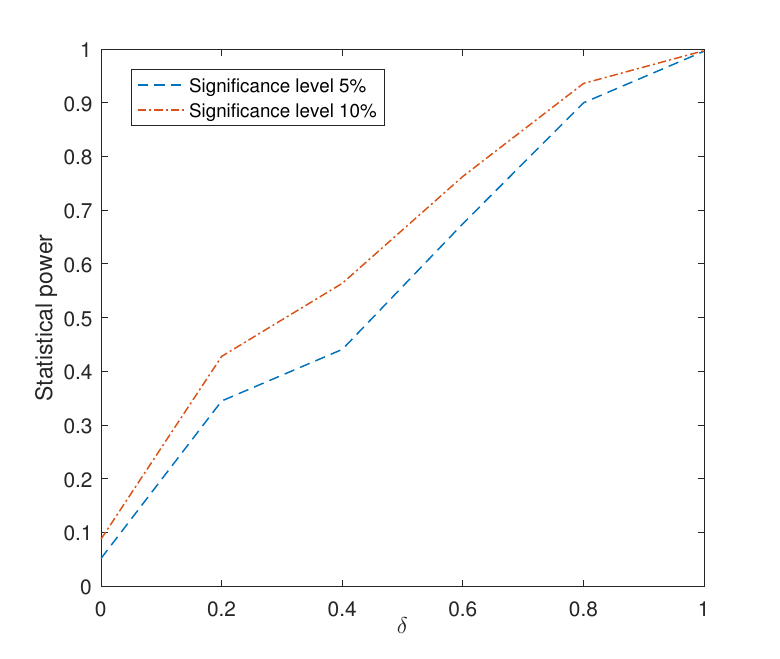}
	\caption{Simulated rejection rates for testing change point for Model F$_2$.}\label{F2}
\end{figure}

We let the jump size $\delta$ range from $0$ to $1.6$ for model F$_1$ and from $0$ to $1$ for model F$_2$ at location $x=0$. For each model, we investigate the empirical sensitivity of our testing procedure under nominal levels $0.05$ and $0.1$ with sample size $n=800$ based on $1000$ replications. The simulated power curves for the above models are plotted in \cref{F1} and \cref{F2}, respectively. According to the plots, the statistical power of the proposed testing procedure increases reasonably fast as $\delta$ increases. On the other hand, we also observe that our test shows a slower speed of increase at near alternatives when compared with ``classic" power curves of parametric tests. We believe that part of the reason is that our nonparametric test aims at detecting alternatives from a large class of discontinuous functions while tests tailored to some parametric models (such as the threshold model) target a specific class of alternative functions. Therefore our test is expected to be less sensitive to small deviations from the null compared to those parametric tests. See also Section \ref{sec:threshold} for a numerical experiment that compares the sensitivity of our testing procedure with that of a parametric test of the threshold model. 

\subsection{Accuracy in estimating the number of change points and their locations}

Utilizing the algorithm listed in \cref{3.2},  in this subsection we focus on estimating the number of change points and their  locations based on $1000$ realizations with sample sizes $n=200,~500$ and $800$. In the simulations, we let the error process $\{\epsilon_i^\ast\}_{i=1}^n$ be i.i.d.~standard normal random variables and consider the following two cases:

\begin{itemize}
	\item Case 1: A single change point.
	\[
	\mu(x)&=
	\begin{cases}
	0.7{\rm e}^{-x^2}, &x<0,
	\cr 0.7{\rm e}^{-x^2}-1.6, &x\geq0,
	\end{cases}\\
	\sigma(x)&={\rm e}^{-0.5x^2}.
	\]
	\item Case 2: Two change points.
	\[
	\mu(x)&=
	\begin{cases}
	0.8x+0.8, &x<-0.3,
	\cr -1, &-0.3\leq x<0,
	\cr -0.2x+0.5, &x\geq0,
	\end{cases}\\
	\sigma(x)&=\frac{{\rm e}^x}{1+{\rm e}^x}.
	\]
\end{itemize}

The estimates of the locations of change points are compared in terms of their mean absolute deviation errors (MADE) and mean squared errors (MSE). We also report the simulated percentage of correctly estimating the number of change points. The results are listed in \cref{T3}. One can see from \cref{T3} that the values of MADE and MSE are all quite small, which suggests the estimated locations by our approach are fairly accurate. Furthermore, as the sample size increases, the percentage of correctly estimating the number of change points increases in both cases.

\begin{table}[htbp!]
	\centering
	\caption{Accuracy in estimating the change-point locations and the percentage of correctly estimating the number of change points.}
	\label{T3}
	\begin{tabular}{lcccccccc}
		\hline
		Case 1&$n$&MADE&MSE&Percentage\\
		\hline
		\multirow{3}{*}{$a_1$=0}&200&0.0451&0.0055&90.51$\%$\\
		&500&0.0195&0.0014&93.77$\%$\\
		&800&0.0134&0.0006&94.51$\%$\\
		\hline
		Case 2&$n$&MADE&MSE&Percentage\\
		\hline
		$a_1=-0.3$&\multirow{2}{*}{200}&0.0519&0.0059
		&\multirow{2}{*}{$81.82\%$}\\
		$a_2=0$&&0.0757&0.0069&\\
		\hline
		$a_1=-0.3$&\multirow{2}{*}{500}&0.0508&0.0043&\multirow{2}{*}{86.59$\%$}\\
		$a_2=0$&&0.0496&0.0042\\
		\hline
		$a_1=-0.3$&\multirow{2}{*}{800}&0.0386&0.0028&\multirow{2}{*}{89.80$\%$}\\
		$a_2=0$&&0.0362&0.0024\\
		\hline
	\end{tabular}
	\vskip 1mm {\footnotesize\noindent Note: true change point  $a_1=0$ for Case 1; true change points $a_1=-0.3$ and $a_2=0$ for Case 2; MADE: mean absolute deviation error; MSE: mean squared error.}
\end{table}
 
\subsection{Comparison to threshold testing and estimation in threshold model}\label{sec:threshold}
In this subsection, we compare the accuracy and sensitivity of our nonparametric method with existing threshold testing and estimation methods for the classic threshold AR (TAR) model proposed by \cite{Tong1980} when the TAR model is indeed the underlying data generating mechanism. We consider the following two-regime TAR(1) model
 \begin{equation*}
 	X_{i}=
 	\begin{cases}
 		0.5(X_{i-1}+1)+\epsilon_i, &X_{i-1}<0.25,\\
 		\cr \kappa_2(X_{i-1}+1)+\epsilon_i, &X_{i-1}\ge 0.25,
 	\end{cases}
 \end{equation*}
where $\kappa_2=0.5, 0.3, 0.1, -0.1, -0.3, -0.5$ and the error process $\epsilon_i \stackrel{i.i.d.}{\sim}\mathcal{N}(0,0.75^2)$. First, we are interested in comparing the accuracy and power of our nonparametric test with the parametric $F$-test of threshold nonlinearity proposed in \cite{tsay1989testing}. Table \ref{T4} shows the testing results for nonlinearity of the model based on both the parametric and nonparametric methods, in which the sample size is $n=800$ and the number of bootstrap samples is $B=2000$.
 
 \begin{table}[htbp!]
 	\centering
 	\caption{Simulated rejection rates for testing change point with TAR(1) model.}
 	\label{T4}
 	\vspace{0.05in}
 	\begin{tabular}{cccccccccc}
 		\hline
 		$\kappa_2$&&0.5&0.3&0.1&$-0.1$&$-0.3$&$-0.5$\\
 		\hline
 		\multirow{2}{*}{Para.}&$\alpha=0.05$&0.042&0.175&0.831&0.904&1&1\\
 		&$\alpha=0.1$&0.095&0.282&0.897&0.906&1&1\\
 		\hline
 		\multirow{2}{*}{Nonpara.}&$\alpha=0.05$&0.069&0.256&0.406&0.646&0.792&0.910\\
 		&$\alpha=0.1$&0.131&0.378&0.540&0.761&0.861&0.940\\
 		\hline
 	\end{tabular}
 \end{table}
 
 We observe that the nonparametric method has slightly higher powers when the scale coefficient $\kappa_2$ changes slightly from 0.5. However, as $\kappa_2$ becomes $0.1$ or smaller, the parametric method has higher powers than the nonparametric method.
 
In addition, we compare the accuracy in change point estimation. We study the following TAR(1) model,
 \begin{equation*}
 	X_i=
 	\begin{cases}
 		\frac{2}{3}(X_{i-1}+1)+\epsilon_i, &X_{i-1}<0.25,\\
 		\cr -\frac{2}{3}(X_{i-1}+1)+\epsilon_i, &X_{i-1}\ge 0.25,
 	\end{cases}
 \end{equation*}
 where $\epsilon_i \stackrel{i.i.d.}{\sim}\mathcal{N}(0,0.75^2)$.
 Note that parametric estimation of the threshold value of the above two-regime TAR(1) process can be done via the R function \textsf{uTAR} in the \textsf{NTS} package (we refer to \cite{LCT20} for more details). 
 The simulated MADEs and MSEs are listed in \cref{T5}. From \cref{T5}, one can see that both methods provide relatively accurate estimates of the locations of change point (threshold). The parametric method shows more accurate estimation results comparing with those of the nonparametric method. With the above observations, it can be seen that the parametric method is better for testing and detecting change points for the TAR model when the model is well-specified. This result is not surprising since testing sensitivity and estimation accuracy tend to be higher when the model is correctly restricted to a (smaller) parametric class. 
 
 \begin{table}[htbp!]
 	\centering
 	\caption{Estimation accuracy for change-point locations.}
 	\label{T5}
 	\vspace{0.05in}
 	\begin{tabular}{lcccccccc}
 		\hline
 		&$n$&MADE&MSE\\
 		\hline
 		\multirow{3}{*}{Nonpara.}&200&0.1055&0.0143\\
 		&500&0.0519&0.0066\\
 		&800&0.0367&0.0041\\
 		\hline
 		\multirow{3}{*}{Para.}&200&0.0340&0.0027\\
 		&500&0.0178&0.0012\\
 		&800&0.0098&0.0004\\
 		\hline
 	\end{tabular}
 \end{table}

\section{Illustrative example}\label{sec6}

In this section, we consider the daily new confirmed cases of Coronavirus disease of 2019 (COVID-19) in Germany. The dataset contains $156$ observations from April 28th to September 30th of 2020 which can be downloaded from  \href{url}{https://ourworldindata.org/coronavirus-source-data}. From the COVID-19 timeline, Germany registered the first case on January 28th and later suffered an outbreak of this pandemic from mid March to late April. In the data analysis, we select the aforementioned time span between the first and second waves of COVID-19 so that the time series is approximately stationary. Let $X_i$ be the logarithm of confirmed cases at day $i=1,...,156$ and $Y_i=X_{i+1}-X_i$. The sample path of $\{X_i\}$ and the ADCF plot of $\{X_i\}$ are shown in \cref{F3}. Both plots in \cref{F3} suggest that the time series is approximately stationary and has a moderate seasonal dependence with period $S=7$. The seasonal behaviour probably comes from the reporting lag behind during weekends, which happens in almost every country. We consider the state-domain nonlinear regression model (which is equivalent to \cref{new-model}): 
\begin{equation}\label{e.2}
Y_i = \mu(X_i)+\epsilon_i,
\end{equation}
where $\{\epsilon_i\}$ is a martingale difference sequence. In this application, $\mu(x)$ represents the expected increase or decrease in percentage of COVID-19 cases in day $i$ when $X_{i-1}=x$.

\begin{figure}[t]
	\flushleft
	\subfigbottomskip=-5pt
	\subfigure{
		\begin{minipage}{10cm}
			\flushleft
			\includegraphics[trim=60 0 60 0,height=2.5cm,width=13cm]{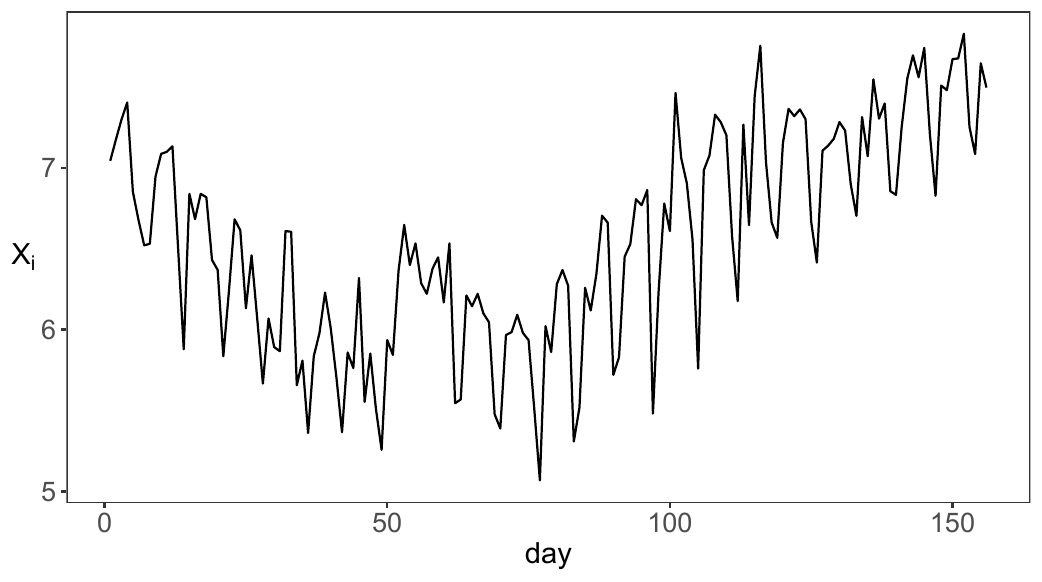}
	\end{minipage}}
	\subfigure{
		\begin{minipage}{10cm}
			\flushleft
			\centering
			\includegraphics[trim=60 0 60 0,height=4.5cm,width=13.8cm]{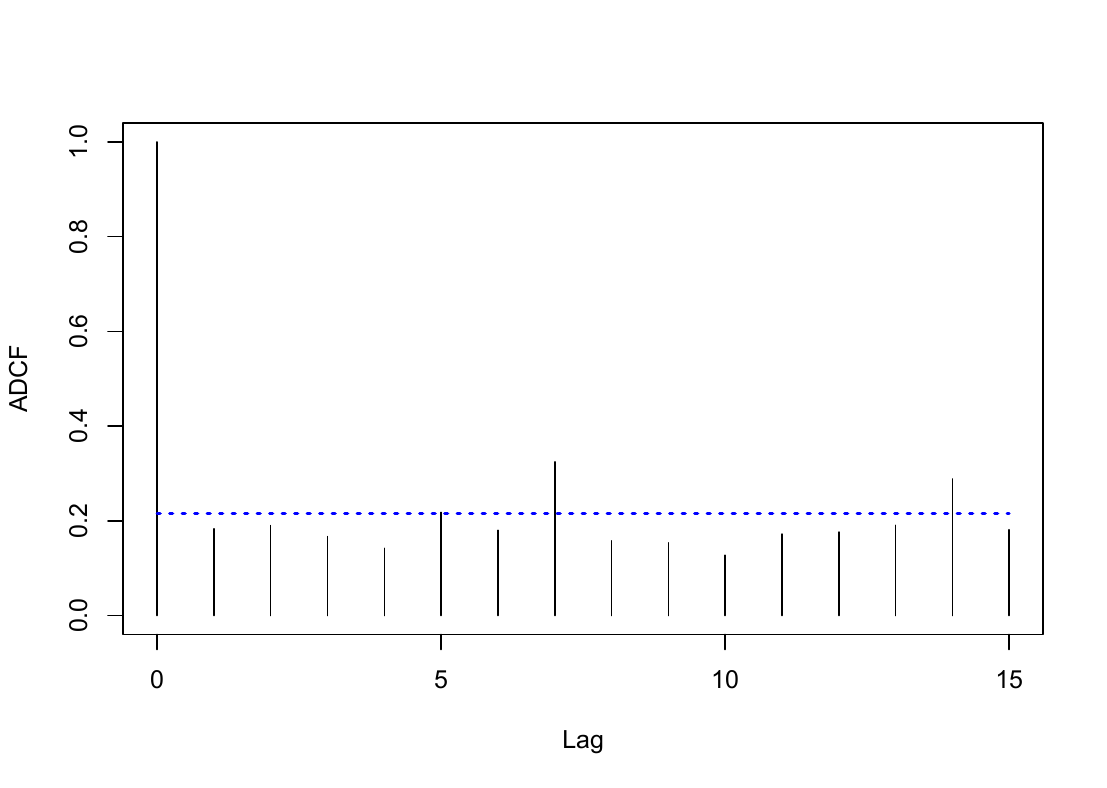}
	\end{minipage}}
	\caption{Top: Logarithm of daily confirmed cases of COVID-19 in Germany from April 28th to September 30th, 2020 (which is denoted as $\{X_i\}_{i=1}^{156}$ in Section \ref{sec6}). Bottom: ADCF plot of $\{X_i\}$.}
	\label{F3}
\end{figure}


We apply the proposed method to testing whether $\mu(\cdot)$ contains any change points. 
We choose $T=[l,u]=[5.7,7.5]$ which includes 82.69$\%$ of $X_i$ so that data are relatively abundant in this region and the test is expected to be accurate. According to the leave-one-out cross-validation criterion, the selected bandwidths $b$ and $h$ are $0.446$ and $0.40$, respectively. Through the practical implementation in \cref{4.1}, we calculate the empirical 99$\%$ quantile of $\Pi_n^\ast$ with 10000 bootstrap samples, which gives $C_{n,\alpha}=1.596$. Next, we investigate the behaviour of the test statistics, which is shown in \cref{F4}. Our test rejects the null hypothesis of continuity of $\mu(\cdot)$ at $1\%$ level and flags two change points at $\hat{x}_1=6.83$ and $\hat{x}_2=7.40$.
\begin{figure}[htbp!]
	\centering
	\includegraphics[height=5.5cm,width=10cm]{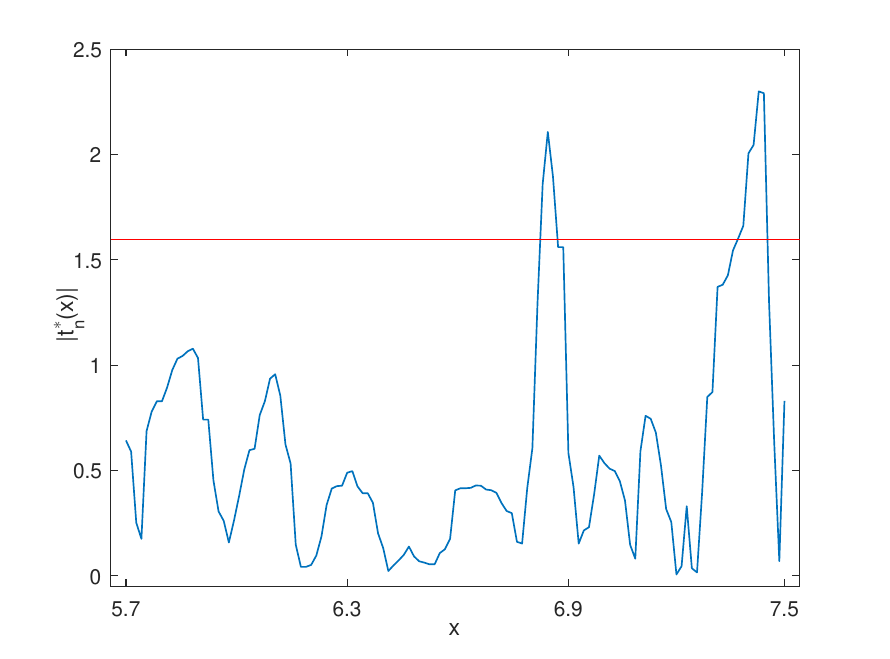}
	\caption{The absolute value of test statistics $|t_n^\ast(x)|$ over $[5.7,7.5]$, red line denotes the $99\%$ sample quantile (=1.596) of $\Pi_n^\ast$.}
	\label{F4}
\end{figure}

Note that $Y_i$ can be viewed as the conditional daily growth rate for COVID-19. For comparison, we also use the nonparametric local polynomial method to fit $\mu(x)$ assuming that there is no change point. The corresponding estimated regression function $\mu_n(x)$ over $[5.7,7.5]$ is plotted on the left hand side of \cref{F5}. On the right hand side of \cref{F5} we plot the fitted drift function $\mu_n(x)$ with the knowledge of the change points. The difference between the two plots in \cref{F5}  suggest that, with or without the knowledge of change points, our understanding of the relationship between $Y_i$ and $X_i$ can be quite different. 
With the knowledge of the change points, we can see that two large jumps exist at $x_1=6.83$ and $x_2=7.40$, which shows that the growth rate changes abruptly at these two points.

\begin{figure}[htbp!]
	\centering
	\subfigure{
		\begin{minipage}{6.5cm}
			\centering
			\includegraphics[height=4.5cm,width=6.5cm]{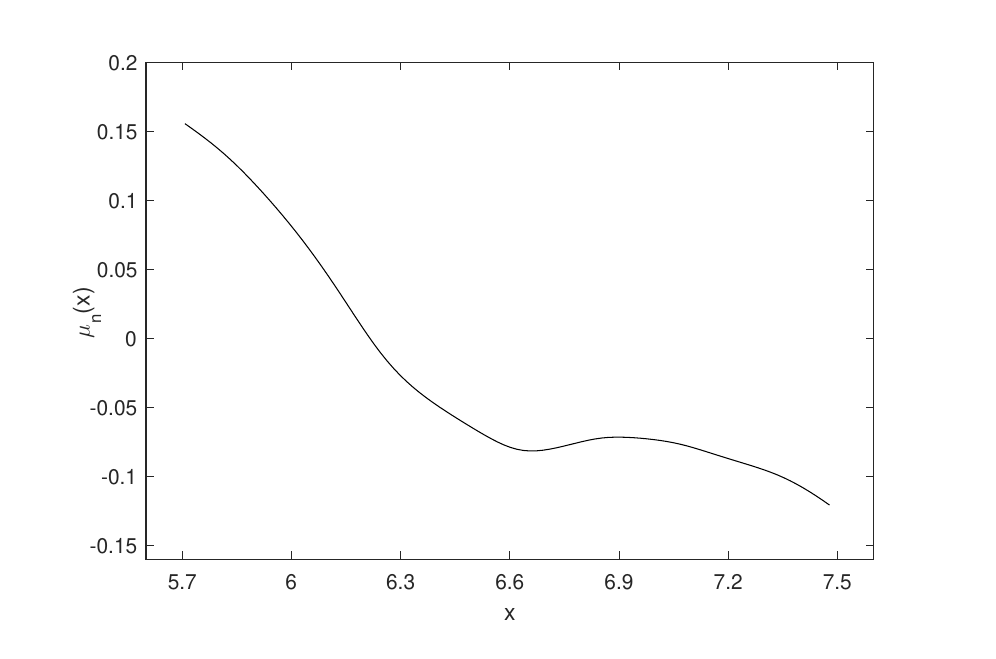}
	\end{minipage}}
	\hspace{-10mm}
	\subfigure{
		\begin{minipage}{6.5cm}
			\centering
			\includegraphics[height=4.5cm,width=6.5cm]{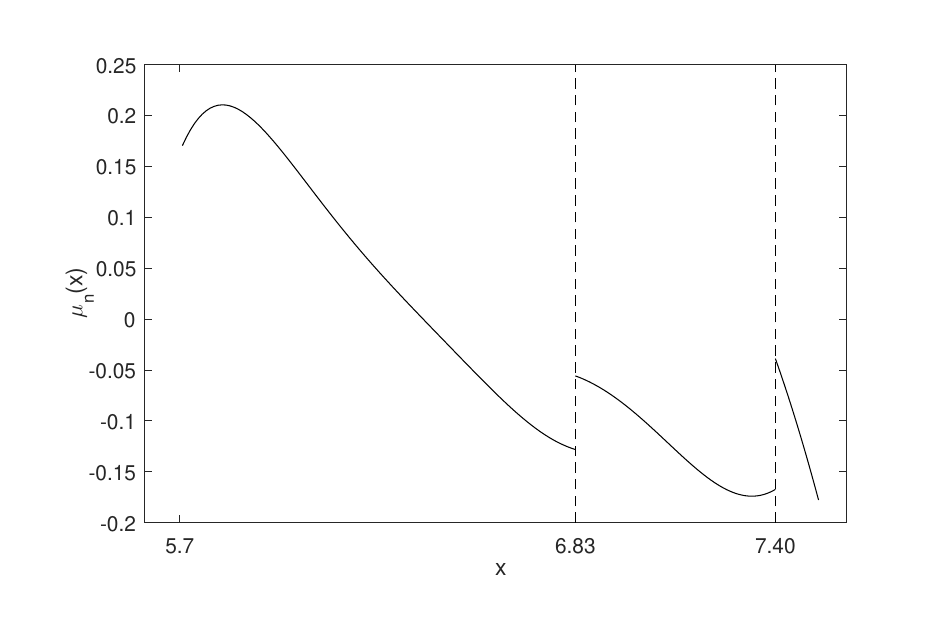}
	\end{minipage}}
	\caption{Left: Smooth fit with no change points;
		Right: Piece-wise smooth fit with the knowledge of two change points.}
	\label{F5}
\end{figure}

It is obvious to see from the right plot of \cref{F5} that those two change points divide the state domain into three regimes/phases. Furthermore, the latter plot indicates that the nonlinear dynamics can be approximated by a three-regime threshold model with the data generating mechanism switching at the detected change points.
Additionally, according the timeline, we can find out the periods corresponding to each phase. The first phase $x\in[5.7,6.83)$ contains May 3-5, 10-11, May 15-August 5, August 9-10, 16-17, 23-24 and 30-31 where the trajectory depicts a relatively inactive period of the virus transmission and the conditional infection rate $\mu(x)$ decreases from positive to negative as $x$ increases. The second phase $x\in[6.83,7.4)$ includes April 28-30, May 6-9, August 7-9, 11-15, 25-29, September 1-6, 8-9, 13-15 and 20-21 where the conditional infection rate jumps up when $x$ surpasses $6.83$ and then it decreases gradually again. 
The third phase $x\in [7.4,7.83]$ corresponds to August 20-21, September 16-19 and 22-26 where a sudden large increase in the conditional infection rate can be found at the left boundary and then it decreases sharply, possibly due to strong governmental interventions. 


In summary, the analyzed period from April 28th to September 30th of 2020 of German COVID-19 data shows a complicated nonlinear dynamic balance between disease transmission and government intervention. The proposed method of the paper could help understand this complex nonlinear dynamics by determining the boundaries of phases where the state-domain relationship changes abruptly and subsequently segment the time series into multiple regimes.

\vspace{0.5cm}

{\noindent\bf Acknowledgments.} The authors are grateful to the editor, Prof.~Serena Ng, and two anonymous referees for their valuable comments and suggestions which significantly improved the quality of the paper. Zhou Zhou's research has been partially sponsored by NSERC (No.489079). Yan Cui is supported by the China Scholarship Council (No.201806170148).

\section{Proofs of main results}\label{sec7}
\subsection{Proof of \cref{thm_main}}\label{proof_thm_main}

The outline of the proof is as follows. Firstly, we use the following decomposition of $X_i$
\[
X_i&=\mu(X_{i-1})+\epsilon_i =\left[\mu(X_{i-1})-\mu(x)\right] + \mu(x) + \epsilon_i,
\]
and prove the results involving the first two terms. This is given in \cref{proof_step1}.

Secondly, we use a technique called $m$-dependent approximation to approximate the martingale $\{\epsilon_i\}$ using $\{\EE[\epsilon_k\mid \xi_{i,i-m}]-\EE[\epsilon_k\mid \xi_{i-1,i-m}]\}$, where $\xi_{k_1,k_2}:=(\eta_{k_1},\dots,\eta_{k_2})$, for a properly chosen order $m\rightarrow\infty$, which simplifies the sum of a sequence of dependent random variables to a corresponding sum of $m$-dependent random variables. This is done in \cref{proof_step2}.

Thirdly, we divide the sequence of $n$ ($m$-dependent) random variables into alternating big and small blocks, where the length of big blocks has a slightly higher order than that of the small blocks. Furthermore, the length of the small blocks is larger than $m$. Using this proof technique, we can approximate the sum of $n$ ($m$-dependent) random variables using the sum of the subsequence which includes the random variables residing in the big blocks. Since the length of small blocks is larger than $m$, the $m$-dependent random variables in different big blocks are \emph{independent}. This part of the proof is given in \cref{proof_step3}.

Fourthly, we only need to deal with a sequence of independent sums of random variables within each big block. In order to get prepared for using the multivariate Gaussian approximation result by \cite{Zaitsev1987}, we first compute the asymptotic covariance structure of the sequence of independent sums. This is given in \cref{proof_step4}.

In the final two steps, we first apply the multivariate Gaussian approximation by \cite{Zaitsev1987}, which is given in \cref{proof_step5} and then prove the convergence to Gumbel distribution, which is given in \cref{proof_step6}. The techniques used in these two steps heavily depend on some existing work, particularly, the work by \cite{Zhao2008,Liu2010a}, which eventually applied the work by \cite{Bickel1973,Rosenblatt1976}.

\subsubsection{Decomposition}\label{proof_step1}
First, we {substitute $X_i=\mu(X_{i-1})+\epsilon_i$} into $t_n(x)$ and separate the terms involving $K$ and $K^*$. We first focus on the term involving $K$ only. That is,

\[\label{eq_decomposition}
\begin{split}
&\frac{1}{nb w(x,b)}\sum_{k=1}^n K\left(\frac{X_{k-1}-x}{b}\right)
\left[ \mu(X_{k-1})+{\epsilon_k}\right],\\
=&\frac{1}{nb w(x,b)}\sum_{k=1}^n K\left(\frac{X_{k-1}-x}{b}\right)
\left[ \mu(X_{k-1})-\mu(x)\right]\\
&+ \frac{1}{nb w(x,b)}\sum_{k=1}^n K\left(\frac{X_{k-1}-x}{b}\right)\mu(x)\\
&+\frac{1}{nb w(x,b)}\sum_{k=1}^n K\left(\frac{X_{k-1}-x}{b}\right){\epsilon_k}.
\end{split}
\]
Next it is easy to see that by the definition of $w(x,b)$, the second term of the decomposition on the right hand side of \cref{eq_decomposition} equals $\mu(x)$. 
For the first term of the decomposition in \cref{eq_decomposition}, following exactly the proof of \citep[Lemma 5.2]{Liu2010a}, uniformly over $x$, we have that

\[
\begin{split}
&\frac{1}{nb w(x,b)}\sum_{k=1}^n K\left(\frac{X_{k-1}-x}{b}\right)
\left[ \mu(X_{k-1})-\mu(x)\right]\\
=&\frac{b^2\psi_{K} \left[\mu''(x)f(x)+2\mu'(x)f'(x) \right]}{\EE[w(x,b)]+\bigO_{\Pr}(\sqrt{\log n/nb})}+\bigO_{\Pr}(b^3)+\bigO_{\Pr}(\tau_n)\\
=&\frac{b^2\psi_{K} \left[\mu''(x)f(x)+2\mu'(x)f'(x) \right]}{\EE[w(x,b)]}+b^2\bigO_{\Pr}(\sqrt{\log n/nb})\\
&+\bigO_{\Pr}\left(\sqrt{\frac{b\log n}{n}}+b^3+\frac{b}{n}\sqrt{\sum_{k=-n}^\infty (\Theta_{n+k}-\Theta_k)^2}\right)\\
=&\frac{b^2\psi_{K} \left[\mu''(x)f(x)+2\mu'(x)f'(x) \right]}{\EE[w(x,b)]}+\bigO_{\Pr}\left(\sqrt{\frac{b\log n}{n}}+b^3\right),
\end{split}
\]
where $\tau_n:=\sqrt{\frac{b\log n}{n}}+b^4+\frac{b}{n}\sqrt{\sum_{k=-n}^\infty (\Theta_{n+k}-\Theta_k)^2}$ comes from \citep[Lemma 2(ii)]{Zhao2008}, and in the last equality we have applied the assumptions on $b$ and $\sum_{k=-n}^\infty (\Theta_{n+k}-\Theta_k)^2$ to get $\frac{b}{n}\sqrt{\sum_{k=-n}^\infty (\Theta_{n+k}-\Theta_k)^2}=\bigO(\sqrt{b\log n/n})$.

\subsubsection{$m$-dependent approximation}\label{proof_step2}

For the third term of the decomposition in \cref{eq_decomposition}, recalling that we have defined the notation $\xi_{k_1,k_2}:=(\eta_{k_1},\dots,\eta_{k_2})$, 
we consider the decomposition of $\epsilon_k$,

	\[
	\epsilon_k=&\left(\epsilon_k-\EE[\epsilon_k\mid \xi_{k,k-m}]\right)\\
	&\quad+\left(\EE[\epsilon_k\mid \xi_{k,k-m}]-\EE[\epsilon_k\mid \xi_{k-1,k-m}]\right)\\
	&\quad+\EE[\epsilon_k\mid \xi_{k-1,k-m}],
	\]
	where $m=\lfloor n^{\tau}\rfloor$ where $\tau<1-\delta_1$. The first and last terms in the decomposition can be ignored comparing to the second term. To see this, consider
	\[
	\EE[\epsilon_k\mid \xi_{k-1,k-m}]&=\EE[\epsilon_k\mid \xi_{k-1,k-m}]-\EE[\epsilon_k\mid \mathcal{F}_{k-1}]\\
	&=\sum_{i=1}^{\infty}\EE[\epsilon_k\mid \xi_{k-1,k-i}]-\EE[\epsilon_k\mid \xi_{k-1,k-i-1}],
	\]
	which implies
	$\|\EE[\epsilon_k\mid \xi_{k-1,k-m}]\|_p=\bigO\left(\sum_{i=m}^{\infty}\rho^i\right)=\bigO(\rho^m)$. Since $m>(\log n)^2$, we have
	
	\[
	\sqrt{nb}\sup_{x \in T}\left|\frac{1}{nb}\sum_{k=1}^n K\left(\frac{X_{k-1}-x}{b}\right)\EE[\epsilon_k\mid\xi_{k-1,k-m}]\right|=\sqrt{\frac{n}{b}}\bigO_{\Pr}(\rho^m)=o_{\Pr}\left((\log n)^{-2}\right).
	\]
	Similarly, one can verify in the same way that
	\[
	\sqrt{nb}\sup_{x \in T}\left|\frac{1}{nb}\sum_{k=1}^n K\left(\frac{X_{k-1}-x}{b}\right)\left(\epsilon_k-\EE[\epsilon_k\mid\xi_{k,k-m}]\right)\right|=o_{\Pr}\left((\log n)^{-2}\right).
	\]
	Furthermore, since the martingale differences are uncorrelated, we have
	\[
	\EE[\epsilon_k^2]-\EE\left[\left(\EE[\epsilon_k\mid \xi_{k,k-m}]-\EE[\epsilon_k\mid \xi_{k-1,k-m}]\right)^2\right]=\bigO(\rho^m).
	\]
	Therefore, defining 
		\[
		\zeta_k:=\frac{\EE[\epsilon_k\mid \xi_{k,k-m}]-\EE[\epsilon_k\mid \xi_{k-1,k-m}]}{\sqrt{\EE\left[\left(\EE[\epsilon_k\mid \xi_{k,k-m}]-\EE[\epsilon_k\mid \xi_{k-1,k-m}]\right)^2\right]}}
		\]
	we have
	\[
	\sqrt{nb}\sup_{x\in T}\left|
	\frac{1}{nb}\sum_{k=1}^n K\left(\frac{X_{k-1}-x}{b}\right)\left({\zeta_k}-\frac{\epsilon_k}{\sqrt{\EE[\epsilon_k^2]}}\right) \right|=o_{\Pr}\left((\log n)^{-2}\right).
	\]

Next, following exactly the proof of \citep[Lemma 5.3]{Liu2010a}, we get that uniformly over $x$
\begin{align*}
&\frac{1}{nb w(x,b)}\sum_{k=1}^n K\left(\frac{X_{k-1}-x}{b}\right){\epsilon_k}\\
=&{\frac{1}{nb w(x,b)}\sum_{k=1}^n K\left(\frac{X_{k-1}-x}{b}\right){\sigma(X_k)\zeta_k}+\bigO_{\Pr}\left(\sqrt{\frac{b\log n}{n}}\right)}\\
=&\frac{1}{nb}\frac{1}{\EE[w(x,b)]+\bigO_{\Pr}(\sqrt{\log n/nb})}\sum_{k=1}^n K\left(\frac{X_{k-1}-x}{b}\right)\sigma(x){\zeta_k}+\bigO_{\Pr}\left(\sqrt{\frac{b\log n}{n}}\right)\\
=&\frac{1}{nb}\frac{1}{f(x)+\bigO_{\Pr}(b^2+\sqrt{\log n/nb})}\sum_{k=1}^n K\left(\frac{X_{k-1}-x}{b}\right)\sigma(x){\zeta_k}+\bigO_{\Pr}\left(\sqrt{\frac{b\log n}{n}}\right).
\end{align*}
Following the above arguments again we can compute the orders for the decomposition of the term involving $K^*$ and get $t_n(x)$ by the differences. Note that many terms such as $\mu(x)$ in the second term and $\bigO(b^2)$ term in the first term cancel out. 
Therefore, overall it can be easily verified that
\[
\begin{split}
t_n(x)&= \frac{\sqrt{f(x)}}{\sigma(x)}\frac{1}{nb f(x)}\sum_{k=1}^n\tK\left(\frac{X_{k-1}-x}{b}\right)\sigma(x){\zeta_k}+\bigO_{\Pr}\left(\sqrt{\frac{b\log n}{n}}+b^3\right)\\
&\quad + \bigO_{\Pr}(b^2+\sqrt{\log n/nb})\bigO_{\Pr}(\sqrt{\log n}),
\end{split}
\]
where $\tK(\cdot)$ is an anti-symmetric kernel defined by
\[
\tilde{K}(u):=K(u)-K^*(u).
\] 
Now to prove \cref{thm_main}, it suffices to show
\[
\Pr\left(\sqrt{\frac{nb}{2\lambda_K}}\sup_{x\in T}\frac{1}{\sqrt{f(x)}}\left|M_n(x)-M_n^*(x)\right|-d_n\le \frac{z}{(2\log \bar{b}^{-1})^{1/2}}\right)\to e^{-2e^{-z}},
\]
where $$M_n(x):=\frac{1}{nb}\sum_{k=1}^nK\left(\frac{X_{k-1}-x}{b}\right){\zeta_k},~M_n^*(x):=\frac{1}{nb}\sum_{k=1}^nK^*\left(\frac{X_{k-1}-x}{b}\right){\zeta_k}.$$

Note that we {have $\EE[\zeta_i]=0$ and $\EE[\zeta_i^2]=1$.}  Next, we define a truncated version of {$\zeta_i$}  by
\[
\breve{\zeta}_i:={\zeta_i\mathbbm{1}\{|\zeta_i|\le (\log n)^{12/(p-2)}\}-\EE\left[\zeta_i\mathbbm{1}\{|\zeta_i|\le (\log n)^{12/(p-2)}\}\right].}
\]

We next {define $\tilde{M}_n(x)$ using $m$-dependent conditional} expectations 
\[
\tilde{M}_n(x):=\frac{1}{nb}\sum_{k=1}^n \frac{\breve{\zeta}_k}{\breve{\sigma}^2} \left\{\EE\left[K\left(\frac{X_{k-1}-x}{b}\right)\,|\, \xi_{k-1,k-m}\right]\right.\\
\quad -\left.\EE\left[K\left(\frac{X_{k-1}-x}{b}\right)\,|\, \xi_{k-2,k-m}\right] \right\},
\]
where $\breve{\sigma}^2:=\EE \breve{\zeta}_1^2$.

\subsubsection{Alternating big and small blocks}\label{proof_step3}

{
	Recall that $m=\lfloor n^{\tau}\rfloor$. We choose $\tau_1$ such that $\tau<\tau_1<1-\delta_1$ and split $[1,n]$ into alternating big and small blocks $H_1,I_1,\cdots,H_{\iota_n},I_{\iota_n}, I_{\iota_n+1}$ with length $|H_i|=\lfloor n^{\tau_1}\rfloor$, $|I_i|=\lfloor n^{\tau}\rfloor$, $\forall 1\le i\le \iota_n$, and $|I_{\iota_n+1}|=n-\iota_n(\lfloor n^{\tau_1}\rfloor+\lfloor n^{\tau}\rfloor)$. Note that $\iota_n=\lfloor n/(\lfloor n^{\tau_1}\rfloor+\lfloor n^{\tau}\rfloor)\rfloor$.
	Then we define
	
	\[
	u_j(x):&= \sum_{k\in H_j}\frac{\breve{\zeta}_k}{\breve{\sigma}^2} \left\{\EE\left[{K}\left(\frac{X_{k-1}-x}{b}\right)\,|\, \xi_{k-1,k-m}\right]\right.\\
	&\quad -\left.\EE\left[{K}\left(\frac{X_{k-1}-x}{b}\right)\,|\, \xi_{k-2,k-m}\right] \right\}.
	\]
	Then we define
	\[
	\widetilde{M}_n(x):=
	\frac{1}{nb}\sum_{j\in \cup_{i=1}^{\iota_n} H_i} u_j(x).
	\]
}

Next we show in the following that {we can approximate $M_n(x)$ by $\tilde{M}_n(x)$ and then approximate $\tilde{M}_n(x)$ by $\widetilde{M}_n(x)$. That is, we show 
	\[\label{lemma_approximate_Mn}
\Pr\left(\sqrt{nb}\sup_{x\in T}\left| {M}_n(x) -\widetilde{M}_n(x) \right|\ge (\log n)^{-2}\right)=o(1).
\]
}

To show \cref{{lemma_approximate_Mn}}, we first follow the proof of \citep[Lemma 5.1]{Liu2010a} using Freedman's inequality for martingale differences \cite{Freedman1975} to get
\[
\Pr\left(\sqrt{nb}\sup_{x\in T}\left|
\frac{1}{nb}\sum_{k=1}^n K\left(\frac{X_{k-1}-x}{b}\right)({\zeta_k}-\breve{\zeta}_k) \right|\ge 3(\log n)^{-2}\right)=o(1),
\]
which implies we can approximate $M_n(x)$ by replacing {$\zeta_k$} with $\breve{\zeta}_k$ in the definition of $M_n(x)$.

Next, we write $K\left(\frac{X_{k-1}-x}{b}\right)$ as a sum of three terms

\[\label{eq_decompose_K}
\begin{split}
	&K\left(\frac{X_{k-1}-x}{b}\right)\\
	=&\left\{K\left(\frac{X_{k-1}-x}{b}\right)- \EE\left[K\left(\frac{X_{k-1}-x}{b}\right)\,|\, \xi_{k-1,k-m}\right] \right\}\\
	&\quad +\left\{\EE\left[K\left(\frac{X_{k-1}-x}{b}\right)\,|\, \xi_{k-1,k-m}\right]- \EE\left[K\left(\frac{X_{k-1}-x}{b}\right)\,|\, \xi_{k-2,k-m}\right] \right\}\\
	&\quad + \EE\left[K\left(\frac{X_{k-1}-x}{b}\right)\,|\, \xi_{k-2,k-m}\right].
\end{split}
\]
{Note that $\breve{\zeta}_k$ is uncorrelated with the second term of the right hand side of \cref{eq_decompose_K}.}
Next, we show that under our assumptions on physical dependence measure, the first term
of the right hand side of \cref{eq_decompose_K} 
becomes very small for large $m$. In order to rigorously prove this fact,
defining

\[
Z_k(x)=\breve{\zeta}_k\left\{K\left(\frac{X_{k-1}-x}{b}\right)- \EE\left[K\left(\frac{X_{k-1}-x}{b}\right)\,|\, \xi_{k-1,k-m}\right] \right\},
\]
we first approximate $\sum_{k=1}^n Z_k(x)$ by the skeleton process $\sum_{k=1}^n Z_k(x_j), 1\le j\le q_n$, where $q_n=\lfloor n^2/b\rfloor$ and $x_j=j/(bq_n)$. {Following the same arguments as in \citep[Proof of Lemma 4.2]{Liu2010a} using Freedman's inequality for martingale differences \cite{Freedman1975}}, we have 
\[
\sup_{x_{j-1}\le x\le x_j} \left| \sum_{k=1}^n (Z_k(x)-Z_k(x_j))\right| =o_{\Pr}\left(\sqrt{nb}/(\log b^{-1})^2\right).
\]
Next, we show $\sup_{x\in T} \EE|Z_k(x)|$ exponentially decays with $m$. We consider two cases $|X_{k-1}-\EE(X_{k-1}\,|\,\xi_{k-1,k-m})|\ge \rho_1^m$ and $|X_{k-1}-\EE(X_{k-1}\,|\,\xi_{k-1,k-m})|< \rho_1^m$, where $\rho_1=\frac{1+\rho}{2}$. Using the assumption $\theta_{n,p}=\bigO(\rho^n)$, we have
\[
\begin{split}
	\sup_{x\in \Reals} \EE|Z_k(x)|&\le C \Pr(|X_{k-1}-\EE(X_{k-1}\,|\, \xi_{k-1,k-m})|\ge \rho_1^m)\\
	&\quad +C\sup_{x\in \Reals} \Pr\left(\left\{\frac{X_{k-1}-x}{b}\in [-1,1]\right\}\right) \\
	&=\bigO(\rho/\rho_1)^m+ \bigO(\rho_1^m/b).
\end{split}
\]
Now, we can show the maximum of the skeleton process over $\{x_j\},j=1,\dots,q_n$ is small. Recall that $m$ is a polynomial of $n$, then we have

\[
\begin{split}
	&\Pr\left(\max_{1\le j\le q_n}\left| \sum_{k=1}^n Z_k(x_j)\right| \ge \sqrt{nb}(\log b^{-1})^{-2}\right)\\
	\le& q_n \frac{\max_{1\le j\le q_n}\EE\left|\sum_{k=1}^n Z_k(x_j)\right|}{\sqrt{nb}(\log b^{-1})^2}\\
	\le& \frac{nq_n}{\sqrt{nb}(\log b^{-1})^2}\sup_{x\in T} \EE|Z_k(x)|=o(1).
\end{split}
\]
Next, we show the third term of the decomposition of $K\left(\frac{X_{k-1}-x}{b}\right)$ in \cref{eq_decompose_K} can also be ignored. In order to show this, we define
\[
N_n(x)=\frac{1}{\sqrt{nb}}\sum_{k=1}^n \breve{\zeta}_k\EE\left[K\left(\frac{X_{k-1}-x}{b}\right)\,|\, \xi_{k-1,k-m}\right].
\]
Using the same argument as in \citep[Proof of Lemma 4.2]{Liu2010a}, we can approximate $N_n(x)$ by its skeleton process, since
$\sup_{x_{j-1}\le x\le x_j} \left| N_n(x)-N_n(x_j) \right| =o_{\Pr}(\log n)^{-2}$.
We first approximate $\sup_x |N_n(x)|$ by the maximum over the skeleton process. Then we have $\Pr\left(\max_{1\le j\le q_n} |N_n(x_j)|\ge (\log n)^{-2}\right)=o(1)$ using Freedman's inequality for martingale differences \cite{Freedman1975}.
Therefore, we can  approximate $M_n(x)$ by
\[
\frac{1}{nb}\sum_{k=1}^n \frac{\breve{\zeta}_k}{{\EE[\zeta_k^2]}} \left\{\EE\left[K\left(\frac{X_{k-1}-x}{b}\right)\,|\, \xi_{k-1,k-m}\right]- \EE\left[K\left(\frac{X_{k-1}-x}{b}\right)\,|\, \xi_{k-2,k-m}\right] \right\}.
\]
Furthermore, since {$|1-\EE[\breve{\zeta}_k^2]/\EE[\zeta_k^2]|=\bigO((\log n)^{-12/(p-2)})$}, we can replace {$\breve{\zeta}_k/\EE[\zeta_k^2]$} by $\breve{\zeta}_k/\breve{\sigma}^2$, which leads to the definition of $\tilde{M}_n(x)$. Therefore, we have proved
\[
\Pr\left(\sqrt{nb}\sup_{x\in T}\left| {M}_n(x) -\tilde{M}_n(x) \right|\ge (\log n)^{-2}\right)=o(1).
\]
{	Therefore, in order to finish the proof of \cref{lemma_approximate_Mn}, it suffices to show
	\[
	\Pr\left(\sqrt{nb} \sup_{x \in T} \left| R_n(x)\right|\ge (\log n)^{-2}\right)=o(1),
	\]
	where
	$R_n(x):=
	\frac{1}{nb}\sum_{j\in \cup_{i=1}^{\iota_n+1} I_i} u_j(x)$.
	Following the same argument as above using skeleton process, we only need to consider the grids $\{x_j, j=0,\dots,q_n\}$. Using the fact that $\tau<\tau_1$ and $n^{-\delta_1}=\bigO(b)$, again by Freedman's inequality for martingale differences, for some constant $C$ that
	\[
	&\Pr\left(\sqrt{nb} \sup_{0\le j\le q_n} \left| R_n(x_j)\right|\ge (\log n)^{-3}\right)\\
	&\le 4q_n\exp\left[\frac{(\log n)^{-6}nb}{-2C(\log n)^{-3}\sqrt{nb}-2C(n^{1-\tau_1+\tau}+n^{\tau_1})b}\right]=o(1),
	\]
	which finishes the proof of \cref{lemma_approximate_Mn}.
}

Observing that, since $K(\cdot)$ is supported on $[0,1]$,
one of the following two terms must be zero:
\[
\begin{split}
\EE\left[K\left(\frac{X_{k-1}-x}{b}\right)\,|\, \xi_{k-1,k-m}\right]- \EE\left[K\left(\frac{X_{k-1}-x}{b}\right)\,|\, \xi_{k-2,k-m}\right],\\
\EE\left[K^*\left(\frac{X_{k-1}-x}{b}\right)\,|\, \xi_{k-1,k-m}\right]- \EE\left[K^*\left(\frac{X_{k-1}-x}{b}\right)\,|\, \xi_{k-2,k-m}\right].
\end{split}
\]
Hence, defining $\widetilde{M}_n^*(x)$ similarly as $\widetilde{M}_n(x)$ using $K^*(\cdot)$ instead of $K(\cdot)$, by \cref{lemma_approximate_Mn}, we only need to focus on the following term

\[
\begin{split}
\hat{M}_n(x):&=\sqrt{\frac{nb}{2\lambda_K f(x)}}\left[\widetilde{M}_n(x)-\widetilde{M}_n^*(x) \right]\\
&=\frac{1}{\sqrt{nb\lambda_{\tilde{K}}f(x)}}\sum_{k\in \cup_{i=1}^{\iota_n} H_i} \frac{\breve{\zeta}_k}{\breve{\sigma}^2} \left\{\EE\left[\tilde{K}\left(\frac{X_{k-1}-x}{b}\right)\,|\, \xi_{k-1,k-m}\right]\right.\\
&\quad -\left.\EE\left[\tilde{K}\left(\frac{X_{k-1}-x}{b}\right)\,|\, \xi_{k-2,k-m}\right] \right\}.
\end{split}
\]
Clearly, in order to complete the proof of \cref{thm_main}, it suffices to show
\[\label{lemma_Gumbel}
\Pr\left(\sup_{x\in T}\left|\hat{M}_n(x)\right|-d_n\le \frac{z}{(2\log \bar{b}^{-1})^{1/2}}\right)\to e^{-2e^{-z}}.
\]

\subsubsection{Asymptomatic covariance structure}\label{proof_step4}

Next, we prove some results on the asymptomatic covariance structure of $\{\hat{M}_n(x)\}$ which will be needed later for Gaussian approximation using the results in \cite{Bickel1973}. Define the following quantities: $r(s):=\int K(x)K(x+s)\dee x/\lambda_K$,  $\hat{r}(s):=\EE\hat{M}_n(x)\hat{M}_n(x+s)$, $\tilde{r}(s):=\int \tilde{K}(x)\tilde{K}(x+s)\dee x/\lambda_{\tilde{K}}$, and $\tilde{K}_2:=\int_{-1}^{1}(\tilde{K}'(x))^2\dee x/ (2\lambda_{\tilde{K}})$.
Note that since $\tilde{K}'(0)>0$, we have $\int K(u)K^*(u\pm s)\dee u =\bigO(\int_{0}^{|s|} x(|s|-x)\dee x)=\bigO(|s|^3)=o(|s|^2)$. Then by the definition of $\tilde{r}(s)$, using $\lambda_{\tilde{K}}=2\lambda_K$, we have
\[
\tilde{r}(s)&=\int \tilde{K}(v)\tilde{K}(v+s)\dee v/\lambda_{\tilde{K}}\\
&=\frac{1}{\lambda_{\tilde{K}}}\int \left[K(v)-K^*(v)\right]\left[K(v+s)-K^*(v+s)\right]\dee x\\
&=\frac{1}{2\lambda_K}\left[
\int K(v+s)K(v)\dee v
+\int K^*(v+s)K^*(v)\dee v\right.\\
&\quad\left.
-\int K^*(v+s)K(v)\dee v
-\int K(v+s)K^*(v)\dee v
\right]\\
&= r(s)+o(|s|^2).
\]

Next, according to \citep[Theorems B1 and B2]{Bickel1973}, we have
$r(s)=1-K_2|s|^2+o(|s|^2)$.
Note that
\[
\tilde{K}_2=\int_{-1}^{1}(\tilde{K}'(x))^2\dee x/ (2\lambda_{\tilde{K}})=\frac{1}{2}\int_{-1}^{1}(\tilde{K}'(x))^2\dee x/ (2\lambda_K)
=\frac{1}{2} (2K_2)=K_2.
\]
This implies 
$\tilde{r}(s)=1-\tilde{K}_2|s|^2+o(|s|^2)$, which can also be obtained directly from \citep[Theorems B1 and B2]{Bickel1973}.

Next, we show
$\hat{r}(s)=\tilde{r}(s)+\bigO(b)$.
Note that $\{\breve{\zeta}_k\}$ are {uncorrelated} and $\EE{\breve{\zeta}_k}=0$. Then, using 
$|f(v+s)-\sqrt{f(t)f(s)}|=\bigO(b)$
uniformly over $|s-t|\le 2b$ and $|v|\le 2b$, we have
\[
\begin{split}
	&\EE\hat{M}_n(t)\hat{M}_n{(s)}\\
	=&\frac{1}{nb\lambda_{\tilde{K}}}\int \frac{1}{\sqrt{f(t)f(s)}}
	\sum_{k\in \cup_{i=1}^{\iota_n}H_i}\left\{\EE\left[\tilde{K}\left(\frac{X_{k-1}-t}{b}\right)\tilde{K}\left(\frac{X_{k-1}-s}{b}\right) \right]+\bigO(b^2)\right\}\\
	=&\frac{1}{b\lambda_{\tilde{K}}}\int \frac{1}{f(v+s)+\bigO(b)}\tilde{K}\left(\frac{v-t+s}{b}\right)\tilde{K}\left(\frac{v}{b}\right)f(v+s)\dee v + \bigO(b)\\
	=&\frac{1}{\lambda_{\tilde{K}}}\int \tilde{K}\left(v-t+s\right)\tilde{K}\left(v\right)\dee v + \bigO(b)=\tilde{r}(t-s)+\bigO(b).
\end{split}
\]
Therefore, we have proved that, as $s\to 0$, 
	\[\label{eq_M_n_covariance}
	\tilde{r}(s)=1-\tilde{K}_2|s|^2+o(|s|^2),\quad \tilde{r}(s)=r(s)+o(|s|^2),\quad \hat{r}(s)=\tilde{r}(s)+\bigO(b).
	\]

\subsubsection{Gaussian approximation}\label{proof_step5}

Now, we go back to prove \cref{lemma_Gumbel}.
We use similar techniques as in \citep[Proof of Lemma 4.5]{Liu2010a}. First, as in \cite{Bickel1973}, we split the interval $T$ into alternating big and small intervals $W_1,V_1,\dots,W_N,V_N$, where $W_i=[a_i,a_i+w]$, $V_i=[a_i+w,a_{i+1}]$, $a_i=(i-1)(w+v)$, and $N=\lfloor (u-l)/(w+v)\rfloor$. We let $w$ be fixed, and $v$ be small which goes to $0$. {Since $u$ and $l$ are fixed numbers, without loss of generality, we assume $l=0$ and $u=1$ in this proof.}

Next, we approximate
$\Omega^{+}:=\sup_{0\le t\le 1} \hat{M}_n(t)$
by big blocks $\{W_k\}$. That is, by
$\Psi^{+}:=\max_{1\le k\le N} \Upsilon^{+}_k$, where $\Upsilon^{+}:=\sup_{t\in W_k} \hat{M}_n(t)$.
Then we further approximate $\Upsilon^{+}_k$ via discretization by
$\Xi_k^+:=\max_{1\le j\le \chi} \hat{M}_n(a_k+jax^{-1})$, where $\chi=\lfloor w x/a\rfloor$ with $a>0$. We define $\Omega^-$, $\Psi^-$, $\Upsilon^-_k$, and $\Xi_k^-$ similarly by replacing $\sup$ or $\max$ by $\inf$ or $\min$, respectively. Letting $\Omega=\max(\Omega^+, -\Omega^-)=\sup_{0\le t\le 1}|\hat{M}_n(t)|$ and $x_z=d_n+z/(2\log b^{-1})^{1/2}$, we have

\[
\begin{split}
&\left|\Pr(\Omega\ge x_z)-\Pr(\{\Psi^+\ge x_z\}\cup\{\Psi^-\le -x_z\})\right|\le R_1+R_2,\\
&\left|\Pr(\{\Psi^+\ge x_z\}\cup\{\Psi^-\le -x_z\})-\Pr\left(\bigcup_{k=1}^N\left\{\Xi_k^+\ge x_z\right\} \cup \bigcup_{k=1}^N\left\{\Xi_k^-\le -x_z\right\} \right)\right|\\
& \le R_3+R_4.
\end{split}
\]
where
\[
R_1:&=\Pr\left(\max_{1\le k\le N} \sup_{t\in V_k} \hat{M}_n(t)\ge x_z\right),\\
R_2:&=\Pr\left(\min_{1\le k\le N} \inf_{t\in V_k} \hat{M}_n(t)\le -x_z\right),\\
R_3:&=\sum_{k=1}^N \left|\Pr(\Upsilon_k^+\ge x_z)-\Pr(\Xi_k^+\ge x_z)\right|,\\
R_4:&=\sum_{k=1}^N \left|\Pr(\Upsilon_k^-\le -x_z)-\Pr(\Xi_k^-\le -x_z)\right|.
\]

Next, we are ready to apply Gaussian approximation. 
	We first use discretization for approximating $\hat{M}_n(x)$.
	Let $s_j=j/(\log n)^6, 1\le j<t_n$, where $t_n=1+\lfloor (\log n)^6t\rfloor$, $s_{t_n}=t$. Write $[s_{j-1},s_j]=\bigcup_{k=1}^{q_n}[s_{j,k-1},s_{j,k}]$, where $q_n=\lfloor (s_j-s_{j-1})n^2\rfloor =\lfloor n^2/(\log n)^6 \rfloor$ and $s_{j,k}-s_{j,k-1}=(s_j-s_{j-1})/q_n$. Following the same arguments as in \citep[Proof of Lemma 4.6]{Liu2010a}, we have the following discretization approximation holds for all large enough $Q$,
	\[
	&\Pr\left(\sup_{0\le s\le t} \hat{M}_n(v+s)\ge x \right)\\
	\le& \Pr\left(\max_{1\le j\le t_n}\hat{M}_n(v+s_j)\ge x-(\log n)^{-2}\right)+Cn^{-Q}.
	\]
	Next, we apply the multivariate Gaussian approximation from \cite{Zaitsev1987}. To this end, similar to the definition of $u_j(t)$, we first define
	\[
	\tilde{u}_j(t):&= \sum_{k\in H_j}\frac{\breve{\zeta}_k}{\breve{\sigma}^2} \left\{\EE\left[\tilde{K}\left(\frac{X_{k-1}-t}{b}\right)\,|\, \xi_{k-1,k-m}\right]\right.\\
	&\quad -\left.\EE\left[\tilde{K}\left(\frac{X_{k-1}-t}{b}\right)\,|\, \xi_{k-2,k-m}\right] \right\}, \quad j=1,\dots, \iota_n
	\]
	{Note that the sequence of random variables $\{\tilde{u}_j(t), j=1,\cdots,\iota_n\}$ are independent.}
	Then we define
	\[
	\begin{split}
		\widehat{u}_j(t)&:=\tilde{u}_j(t)\mathbbm{1}\{|\tilde{u}_j(t)|\le \sqrt{nb}(\log n)^{-20p/(p-2)}\}\\
		&\quad -\EE\left[\tilde{u}_j(t)\mathbbm{1}\{|\tilde{u}_j(t)|\le \sqrt{nb}(\log n)^{-20p/(p-2)}\}\right].
	\end{split}
	\]
	Now we introduce $\widehat{M}_n(t):=\frac{1}{\sqrt{nb\lambda_{\tilde{K}}f(t)}}\sum_{j=1}^{\iota_n} \widehat{u}_j(t)$.
	Then using \citep[Theorem 1.1]{Zaitsev1987} as well as $\sup_t \max_{1\le j\le \iota_n}\|\widehat{u}_j(t)-\tilde{u}_j(t)\|\le Cn^{-Q}$ for large enough $Q$, we have 
	\[
	\begin{split}
		&\Pr\left(\max_{1\le j\le t_n} \hat{M}_n(v+s_j)\ge x-(\log n)^{-2} \right)\\
		\le& \Pr\left(\max_{1\le j\le t_n} \widehat{M}_n(v+s_j)\ge x-(\log n)^{-2}\right)+Cn^{-Q}\\
		\le& {\Pr\left(\max_{1\le j\le t_n} Y_n(j)\ge x-2(\log n)^{-2}\right)+Ct_n^{5/2}\exp\left(-\frac{C(\log n)^{18p/(p-1)}}{t_n^{5/2}}\right)}\\
		&\quad +Cn^{-Q},
	\end{split}
	\]
	where $(Y_n(1),\dots,Y_n(t_n))$ is a centered Gaussian random vector with covariance matrix
	$\widehat{\Sigma}_n=\Cov(\widehat{M}_n(v+s_1),\dots,\widehat{M}_n(v+s_{t_n}))$.
	
	Let $\psi$ be the density function of standard Gaussian, and $H_2(a)$ be the Pickands constants \citep[Theorem A1, Lemma A1, and Lemma A3]{Bickel1973}. Using \cref{eq_M_n_covariance}, let $t>0$ be such that $\inf \{s^{-2}(1-\tilde{r}(s)): 0\le s\le t\}>0$. 
	Following exactly the arguments in \citep[Proof of Lemma 4.6]{Liu2010a} to apply \citep[Lemma A3 and Lemma A4]{Bickel1973}, we can obtain that for $a>0$,
	\[
	\Pr\left(\bigcup_{j=1}^{\lfloor tx/a\rfloor}\left\{\hat{M}_n(v+jax^{-1})\ge x\right\}\right)=x\psi(x)\frac{H_2(a)}{a}\tilde{K}_2^{1/2}t+o(x\psi(x)),
	\]
	uniformly over $0\le v\le 1$. The limit when $a\to 0$ also holds, that is
	\[\label{lemma_key1}
	\Pr\left(\bigcup_{0\le s\le t}\left\{\hat{M}_n(v+s)\ge  x\right\}\right)=x\psi(x)\tilde{K}_2^{1/2}t/\sqrt{\pi} +o(x\psi(x)),
	\]
	where we have used the Pickands constants $H_2=\lim_{a\to 0} H_2(a)/a=1/\sqrt{\pi}$.
	The left tail version of the tail bounds also holds with $\ge x$ replaced by $\le x$.
	Furthermore, we can show through elementary calculations that 
\[
\lim_{a\to 0}\limsup_{v\to 0}\limsup_{n\to \infty} R_j=0, \quad j=1,\dots,4.
\]
Therefore, it suffices to show the following convergence to Gumbel law	
\[\label{lemma_key2}
\lim_{a\to 0}\limsup_{v\to 0}\limsup_{n\to\infty}\left|\Pr\left(\bigcup_{k=1}^N\left\{\Xi_k^+\ge x_z\right\} \cup \bigcup_{k=1}^N\left\{\Xi_k^-\le -x_z\right\} \right)-(1-e^{-2e^{-z}})\right|=0.
\]

\subsubsection{Convergence to Gumbel distribution}\label{proof_step6}

The main steps of the proof of \cref{lemma_key2} are as follows. First, we approximate $\hat{M}_n(t)$ by $Y_n(t)$. Then, we approximate $Y_n(t)$ by another quantity $\hat{M}_n'(t)$ which is defined similarly to $\hat{M}_n(x)$ but using a sequence of i.i.d.\ random variables instead of the dependent time series $\{X_k\}$. Finally, we apply \citep[Theorem]{Rosenblatt1976} to show convergence to Gumbel distribution. 

We define
\[
\begin{split}
	\mathbf{B}_{k,j}:&=\{\hat{M}_n(a_k+jax^{-1})\ge x\}\cup \{\hat{M}_n(a_k+jax^{-1})\le -x\},\\
	\mathbf{D}_{k,j}:&=\{Y_n(a_k+jax^{-1})\ge x\}\cup \{Y_n(a_k+jax^{-1})\le -x\},
\end{split}
\]
where $Y_n(\cdot)$ is a centered Gaussian process with covariance function
\[
\Cov(Y_n(s_1),Y_n(s_2))=\Cov(\hat{M}_n(s_1),\hat{M}_n(s_2)).
\]

First we approximate $\hat{M}_n(t)$ using $Y_n(t)$. Recall that $w$ and $v$ are the lengths of big and small blocks $W_i$ and $V_i$. Let $N=\lfloor 1/(w+v)\rfloor$. 	Define a different truncation order for $M_n(t)$ by $\widehat{M}'_n(t):=\frac{1}{\sqrt{nb\lambda_{\tilde{K}}f(t)}}\sum_{j=1}^{\iota_n} \widehat{u}'_j(t)$ for given $d$, where
\[
\begin{split}
	\widehat{u}'_j(t)&:=\tilde{u}_j(t)\mathbbm{1}\{|\tilde{u}_j(t)|\le \sqrt{nb}(\log n)^{-20dp/(p-2)}\}\\
	&\quad-\EE\left[ \tilde{u}_j(t)\mathbbm{1}\{|\tilde{u}_j(t)|\le \sqrt{nb}(\log n)^{-20dp/(p-2)}\}\right].
\end{split}
\] 
Then using $\widehat{M}'_n(t)$ and following exactly the same proof from \citep[Proof of Lemma 4.10]{Liu2010a} to get that, for any fixed integer $l$ that $1\le l\le N/2$,
	\[\label{temp111}
	\begin{split}
	&\left|\Pr\left(\bigcup_{k=1}^N \mathbf{A}_k\right) -\sum_{d=1}^{2l-1} (-1)^{d-1}\left(\sum_{1\le i_1<\cdots< i_d\le N}-\sum_{\mathcal{I}}\right) \Pr\left(\bigcap_{j=1}^d \mathbf{C}_{i_j}\right)  \right|\\
	\le& \frac{C^{2l}}{(2l)!}+\bigO\left(\frac{1}{\log n}\right),
	\end{split}
	\]
	where 
	$\mathbf{A}_k:=\bigcup_{j=1}^{\lfloor wx/a \rfloor} \mathbf{B}_{k,j}$, $\mathbf{C}_k:=\bigcup_{j=1}^{\lfloor wx/a \rfloor} \mathbf{D}_{k,j}$,
	$C$ does not depend on $l$, and
	\[
	\mathcal{I}:=\left\{
	1\le i_1<\cdots< i_d\le N: \min_{1\le j\le d-1} q_j\le \lfloor 2w^{-1}+2\rfloor
	\right\}.
	\] 

Next, we construct $\hat{M}_n'(t)$ in the following way. Let $\{\eta_i^{(k)}\}, i\le k\le n,$ be i.i.d.\ copies of $\{\eta_i\}$, and $\xi_j^{(k)}=(\dots,\eta_{j-1}^{(k)},\eta_j^{(k)})$. Let $X_i^{(k)}=G(\xi_j^{(k)})$. Note that $X_k^{(k)}, 1\le k\le n,$ are i.i.d. Now define $\mathbf{A}_k'$ the same as $\mathbf{A}_k$ except by replacing $Y_j$ and $\{\eta_i\}$ with $X_k^{(k)}$ and $\{\eta_i^{(k)}\}$, respectively. Repeat the previous arguments for getting \cref{temp111}, we have
\[
\begin{split}
&\left|\Pr\left(\bigcup_{k=1}^N \mathbf{A}_k'\right) -\sum_{d=1}^{2l-1} (-1)^{d-1}\left(\sum_{1\le i_1<\cdots< i_d\le N}-\sum_{\mathcal{I}}\right) \Pr\left(\bigcap_{j=1}^d \mathbf{C}_{i_j}\right)  \right|\\
\le& \frac{C^{2l}}{(2l)!}+\bigO\left(\frac{1}{\log n}\right).
\end{split}
\]
Letting $n\to \infty$ then $l\to \infty$, by triangle inequality, we have
\[\limsup_{n\to \infty}\left| \Pr\left(\bigcup_{k=1}^N \mathbf{A}_k\right)-\Pr\left(\bigcup_{k=1}^N \mathbf{A}_k'\right)    \right|=0.
\]
Now the key observation is that we can deal with $\{\mathbf{A}_k'\}$ now and $\mathbf{A}_k'$ are defined using $\{X_k^{(k)}\}$ which are i.i.d. Next, we define $R_1'$ to $R_4'$ the same as $R_1$ to $R_4$ except using $\{X_k^{(k)}\}$ and $\{\eta_i^{k}\}$ instead of $\{X_k\}$ and $\{\eta_i\}$, then by \cref{lemma_key1} and elementary calculations again we have $\lim_{a\to 0}\limsup_{v\to 0}\limsup_{n\to \infty} R_j'=0$ for $j=1,\dots,4$. This implies
\[
\lim_{a\to 0}\limsup_{v\to 0}\limsup_{n\to \infty}\left|\Pr\left(\bigcup_{k=1}^N \mathbf{A}_k'\right)- \Pr\left(\sup_{0\le t \le 1} \left| \hat{M}_n'(t)\right|<x \right) \right| =0,
\]
where $\hat{M}_n'(t)$ is defined in the same way as $\hat{M}_n(t)$ by replacing $\{X_k\}$ with $\{X_k^{(k)}\}$, and $\{\eta_i\}$ with $\{\eta_i^{(k)}\}$. Finally, since $\{X_k^{(k)}\}$ are i.i.d., we can apply \citep[Theorem]{Rosenblatt1976}, which leads to the convergence of $\Pr\left(\sup_{0\le t \le 1} \left| \hat{M}_n'(t)\right|<x_z \right)$ to $e^{-2e^{-z}}$. This completes the proof of \cref{thm_main}.

\subsection{Proof of \cref{thm_estimation}}\label{proof_thm_estimation}
First, let $r_n$ and $s_n$ be positive sequences, then $r_n=\Omega(s_n)$ if $s_n=o(r_n)$. On the other hand, $r_n=\Theta(s_n)$ if both $s_n=\bigO(r_n)$ and $r_n=\bigO(s_n)$ hold. 
Note that
\[
\begin{split}
&\Pr\left(\left\{\hat{M}=M\right\}\cap \left\{\max_{1\le i\le M} |\hat{x}_i-x_i|< c_n \right\}\right)\\
&=\Pr\left(\max_{1\le i\le M} |\hat{x}_i-x_i|< c_n \mid \hat{M}=M \right)\Pr\left(\hat{M}=M\right).
\end{split}
\]
{
	We first argue that $\Pr\left(\hat{M}<M\right)\to 0$, which implies at least one change point hasn't been detected, then we can write
	$$\Pr\left(\hat{M}<M\right)\le \sum_{i=1}^M \Pr\left(\textrm{the change point $a_i$ is not detected}\right).$$ Then, {by the validity of the bootstrap procedure, when $\sqrt{\frac{b\log n}{n}}=o(\tilde{\Delta}_n)$, the power of the test goes to $1$ as $n\to \infty$} which implies that for any $i$, 
	$$\Pr\left(\textrm{the change point $a_i$ is not detected}\right)\to 0.$$
	This conclude that $\Pr\left(\hat{M}<M\right)\to 0$.
}

{
	Next we argue that $\Pr\left(\hat{M}>M\right)\to \alpha$.
	Note that $\hat{M}>M$ implies there is a set $\tilde{T}$ without any change point in it, however,
	$ \sup_{x\in \tilde{T}}|t_n(x)|\ge C_{n,\alpha}$. Note that by our algorithm, we can consider $\tilde{T}$ to be the largest set constructed by ruling out $M$ intervals from $[l,u]$ such that each interval has length $2b$ and contains one change point. Then since $M$ is a fixed constant and $b\to 0$, we have $|\tilde{T}|= (|u-l|-2Mb)^+ \to |u-l|$. Then we can apply our main result \cref{thm_main} again on $\tilde{T}$ to get that 	$\Pr\left( \sup_{x\in \tilde{T}}|t_n(x)|\ge C_{n,\alpha}\right)\to \alpha$, which implies $\Pr\left(\hat{M}>M\right)\to \alpha$.
}

 Therefore, we have $\Pr\left(\hat{M}=M\right)\to 1-\alpha$. Then it suffices to show
\[
\Pr\left(\max_{1\le i\le M} |\hat{x}_i-x_i|< c_n \mid \hat{M}=M \right)\to 1.
\]
Since $M$ is finite, we only need to focus on one change point. Let $x_0$ be any of the true change point and $\hat{x}$ be its estimate, it suffices to show $\Pr\left(|\hat{x}-x_0|\ge c_n \mid \hat{M}=M\right)\to 0$.
Without loss of generality, we assume $\hat{x}-x_0=\hat{c}_n=o_{\Pr}(b)$ and $t_n(x_0)>0$. The case $t_n(x_0)<0$ can be shown using similar arguments. Now we follow similar arguments as in \cite{Muller1992}. Define
$\zeta(c):=t_n(x_0+c)
-t_n(x_0)$,
for $c=o(b)$.
Then we can write
$\hat{c}_n=\arg\max \zeta(c)$.
Therefore, it suffices to {show 
$\hat{c}_n=\bigO_{\Pr}\left(\frac{1}{\tilde{\Delta}_n}\sqrt{\frac{b\log n}{n}}\right)$}.
Suppose $b$ is small enough such that the $b$-neighborhood of $x_0$ does not include any other change points, then we apply the previous decomposition in \cref{eq_decomposition}. Note that since $x_0$ is a change point, without loss of generality, we assume $\mu(x)$ is left continuous at $x=x_0$, then the following term has the order of  {$\Theta(\tilde{\Delta}_n)$}:
\[
\frac{1}{nb}\left|\sum_{k=1}^n \tK\left(\frac{X_{k-1}-x_0}{b}\right)\frac{\mu(x_0)}{f(x_0)}-\sum_{k=1}^n \tK\left(\frac{X_{k-1}-(x_0+c)}{b}\right)\frac{\mu(x_0+c)}{f(x_0+c)}\right|.
\]
Furthermore, using $\int_{0}^s K(x)\dee x = \Theta(s^2)$ because of $\tilde{K}'(0)>0$, considering cases $|X_{k-1}-x_0| \in [0,c]$ and $|X_{k-1}-x_0|\in (c,b]$ separately, we have
\[
\begin{split}
&\left|\frac{1}{nb f(x_0+c)}\sum_{k=1}^n \tK\left(\frac{X_{k-1}-(x_0+c)}{b}\right)
\left[ \mu(X_{k-1})-\mu(x_0+c)\right]\right.\\
&\quad -\left.\frac{1}{nb f(x_0)}\sum_{k=1}^n \tK\left(\frac{X_{k-1}-x_0}{b}\right)
\left[ \mu(X_{k-1})-\mu(x_0)\right]\right|\\
=&\left[\frac{1}{b }\int_{0}^{c}K\left(\frac{x}{b}\right)\dee x\right] {\Theta_{\Pr}(\tilde{\Delta}_n)}\\
&\quad + \left[\frac{1}{b }\int_{c}^{b}K\left(\frac{x}{b}\right)\dee x\right] \bigO_{\Pr}(b^3+\sqrt{b\log n/n})\\
=&\left[\int_{0}^{c/b}K(x)\dee x\right]{\Theta_{\Pr}(\tilde{\Delta}_n)} +\bigO_{\Pr}(b^3+\sqrt{b\log n/n})\\
=&{(c/b)^2\Theta_{\Pr}\left(\tilde{\Delta}_n\right)}+\bigO_{\Pr}(b^3+\sqrt{b\log n/n}).
\end{split}
\]
Finally, {by the assumptions on $K'$ in \cref{thm_estimation}, we can follow the same arguments in the proof of \cref{thm_main} as the $m$-dependent approximation \cref{proof_step2} and alternating big/small blocks \cref{proof_step3} applying to $\tilde{K}'$ instead of $\tilde{K}$ to get
\[
\frac{1}{nb f(x)}\sum_{k=1}^n\tK'\left(\frac{X_{k-1}-x}{b}\right)\epsilon_k=\bigO_{\Pr}\left(\sqrt{\frac{\log n}{nb}}\right)
\]
Furthermore, using the fact that $|\tilde{K}''(u)|$ is uniformly bounded and mean value theorem, we have
\[
&\EE\left[\frac{1}{f(x)}\left(\tK\left(\frac{X_{k-1}-x}{b}\right)-\tK\left(\frac{X_{k-1}-x+c}{b}\right)+\left(\frac{c}{b}\right)\tK'\left(\frac{X_{k-1}-x}{b}\right)\right)^2\right] \\
&=\int \frac{1}{f(x)}\left[\tK\left(\frac{y-x}{b}\right)-\tK\left(\frac{y-x+c}{b}\right)+\left(\frac{c}{b}\right)\tK'\left(\frac{y-x}{b}\right)\right]^2 f(y)\dee y\\
&=\int b\left[\tK\left(t\right)-\tK\left(t+\frac{c}{b}\right)+\left(\frac{c}{b}\right)\tK'\left(t\right)\right]^2 \frac{f(tb+x)}{f(x)}\dee t\\
&=b\left(\frac{c}{b}\right)^2\int \left[\frac{\tK(t)-\tK(t+c/b)}{c/b}+\tK'(t)\right]^2(1+\bigO(b))\dee t=\bigO\left(b\left(\frac{c}{b}\right)^4\right).
\]
Next, we define a new kernel $\check{K}$ such that 
\begin{align*}
&\check{K}\left(\frac{X_{k-1}-x}{b}\right)\\
:=&\left(\frac{b}{c}\right)^2\left[\tK\left(\frac{X_{k-1}-x}{b}\right)-\tK\left(\frac{X_{k-1}-x+c}{b}\right)+\left(\frac{c}{b}\right)\tK'\left(\frac{X_{k-1}-x}{b}\right)\right]
\end{align*}
so we have
$\EE\left[\frac{1}{f(x)}\check{K}\left(\frac{X_{k-1}-x}{b}\right)^2\right]=\bigO(b)$.
Then we can approximate the following term using the same arguments of $m$-dependent approximation and alternating big/small blocks as in \cref{proof_step2} and \cref{proof_step3} in the proof of \cref{thm_main} applying to this new kernel $\check{K}$ to get
\[
&\frac{1}{nb f(x)}\sum_{k=1}^n \left[\tK\left(\frac{X_{k-1}-x}{b}\right)-\tK\left(\frac{X_{k-1}-x+c}{b}\right)+\left(\frac{c}{b}\right)\tK'\left(\frac{X_{k-1}-x}{b}\right)\right]{\epsilon_k}\\
&=\left(\frac{c}{b}\right)^2\left[\frac{1}{nb f(x)}\sum_{k=1}^n\check{K}\left(\frac{X_{k-1}-x}{b}\right)\epsilon_k\right]=\bigO_{\Pr}\left(\left(\frac{c}{b}\right)^2\sqrt{\frac{\log n}{nb}}\right).
\]
} 
Therefore, we have
\[
&\frac{1}{nb f(x)}\sum_{k=1}^n \left[\tK\left(\frac{X_{k-1}-x}{b}\right)-\tK\left(\frac{X_{k-1}-x+c}{b}\right)\right]{\epsilon_k}\\
&=\left(\frac{c}{b}\right)\left[\frac{-1}{nb f(x)}\sum_{k=1}^n\tK'\left(\frac{X_{k-1}-x}{b}\right)\epsilon_k\right]+{\bigO_{\Pr}\left(\left(\frac{c}{b}\right)^2\sqrt{\frac{\log n}{nb}}\right)}\\
&=\left(\frac{c}{b}\right){\bigO_{\Pr}\left(\sqrt{\frac{\log n}{nb}}\right)}+\left(\frac{c}{b}\right)^2 {\bigO_{\Pr}\left(\sqrt{\frac{\log n}{nb}}\right)}.
\]
Then using {$\sqrt{\frac{\log n}{nb}}=o(\tilde{\Delta}_n)$} we can conclude that
\[
\zeta(c)=-\left(\frac{c}{b}\right)^2{\Theta_{\Pr}\left(\tilde{\Delta}_n\right)}+\left(\frac{c}{b}\right){\bigO_{\Pr}\left(\sqrt{\frac{\log n}{nb}}\right)}-\bigO_{\Pr}(b^3+\sqrt{b \log n/n}).
\]
Recall that the estimated change point $\hat{x}=x_0+\hat{c}_n$, where 
$\hat{c}_n=\arg\max \zeta(c)$,
then we have
\[
\hat{c}_n={\bigO_{\Pr}\left(\frac{b}{\tilde{\Delta}_n}\sqrt{\frac{\log n}{nb}}\right)=\bigO_{\Pr}\left(\frac{1}{\tilde{\Delta}_n}\sqrt{\frac{b\log n}{n}}\right)},
\]
whenever {$b^4=o((\log n)/(n\tilde{\Delta}_n))$ and $b^3=o((\log n)/(n\tilde{\Delta}_n^2))$}. This is always true since we have assumed {$\delta_2\le 1/4$} which implies {$b=\bigO(n^{-1/4})$} so {$b^4=\bigO(1/n)=o((\log n)/n)$}.
Therefore, if we choose $c_n>0$ such that $\hat{c}_n=o(c_n)$,
then we have $\Pr(|\hat{c}_n|< c_n)\to 0$, which implies
$\Pr\left(|\hat{x}-x_0|\ge c_n\mid \hat{M}=M\right)\to 0$.

\section{Additional proofs}
\subsection{Proof of 
	\cref{remark1}}\label{proof_remark1}
	For $\sigma_n^2(x)$, we first write it as the sum of three terms:
	\[
	\sigma_n^2(x)&=\frac{1}{nhf_n(x)}\sum_{k=1}^n W\left(\frac{X_k-x}{h}\right)\epsilon_k^2\\
	&\quad+\frac{2}{nhf_n(x)}\sum_{k=1}^n W\left(\frac{X_k-x}{h}\right)[\mu(X_k)-\mu_n(X_k)]\epsilon_k\\
	&\quad+\frac{1}{nhf_n(x)}\sum_{k=1}^n W\left(\frac{X_k-x}{h}\right)[\mu(X_k)-\mu_n(X_k)]^2.
	\]
	For the first term, we first approximate $\epsilon_k^2$ by $\{\EE[\epsilon_k^2\mid \xi_{k,k-m}]\}$ where $m=\lfloor n^{\tau}\rfloor$ with $\tau>0$ small enough. Using the same argument as in \cref{proof_thm_main}, we have
	\[
	\sup_x\left|\frac{1}{nh}\sum_{k=1}^n W\left(\frac{X_k-x}{h}\right)\left\{\epsilon_k^2-\EE[\epsilon_k^2\mid \xi_{k,k-m}]\right\}\right|=\bigO_{\Pr}\left(\rho^m\right)={o_{\Pr}\left(n^{-1/2}\right)},
	\]
	{where we choose $m=c\log n$ with $c>-\frac{1}{2\log(\rho)}$. We then divide $1,\dots,m$ into $\lfloor n/m\rfloor+1$ blocks indexed by $1,\cdots, \lfloor n/m\rfloor+1$. Then it's clear that the sum of blocks with odd indices is independent with the sum of blocks with even indices.}
	Following the same argument as the proof of \citep[Theorem 2.5]{Liu2010a} {for each subsequence of the blocks}, {and use a union bound,} we can get
	\[
	&\sup_x\left|\frac{1}{nhf_n(x)}\sum_{k=1}^n W\left(\frac{X_k-x}{h}\right)\EE[\epsilon_k^2\mid \xi_{k,k-m}]\right|\\
	&\quad={\bigO_{\Pr}\left(2m\left(h^2+\sqrt{\frac{\log (n/m)}{(n/m)h}}\right)\right)}\\
	&\quad={\bigO_{\Pr}\left(\log n\left(h^2+\frac{\log n}{\sqrt{nh}}\right)\right)=\bigO_{\Pr}\left(h^2\log n + \frac{(\log n)^2}{\sqrt{nh}}\right)}
	\]
	
	For the second term, we first approximate $\{\epsilon_k\}$ using $\{\epsilon_k'\}$, where $\epsilon_k':=\EE[\epsilon_k\mid \xi_{k,k-m}]-\EE[\epsilon_k\mid \xi_{k-1,k-m}]$. Then following the same argument as in \cref{proof_thm_main} we have
	
		\[
	\sup_x\left|\frac{1}{nh}\sum_{k=1}^n W\left(\frac{X_k-x}{h}\right)\left(\epsilon_k-\epsilon_k'\right)\right|=\bigO\left(\rho^m\right).
	\]
	Then, {again choosing $m=c\log n$ and divide $1,\cdots,n$ into $\lfloor n/m\rfloor+1$ blocks,} by the same argument as in \citep[pp.~1875]{Zhao2008}, we can get
	\[
	&\sup_x\left|\frac{2}{nhf_n(x)}\sum_{k=1}^n W\left(\frac{X_k-x}{h}\right)[\mu(X_k)-\mu_n(X_k)]\epsilon_k\right|\\
	&\quad =\bigO_{\Pr}\left({(\log n)^2}\left(\frac{\log n}{nh^{5/2}}+\rho^m\right)\right)={\bigO_{\Pr}\left(\frac{(\log n)^3}{nh^{5/2}}+\frac{(\log n)^2}{\sqrt{n}}\right)}
	\]
	Finally, for the last term, we have
	\[
	&\sup_x \left|\frac{1}{nhf_n(x)}\sum_{k=1}^n W\left(\frac{X_k-x}{h}\right)[\mu(X_k)-\mu_n(X_k)]^2\right|\\
	&=\bigO_{\Pr}\left(h^4+\frac{\log n}{nh}\right)\cdot \sup_x \frac{1}{nh}\sum_{k=1}^n\left|W\left(\frac{X_k-x}{h}\right)\right|=\bigO_{\Pr}\left(h^4+\frac{\log n}{nh}\right).
	\]	
	{Then, using $0<\delta_1<1/4$ we have that
	\[
	\sup_x \left| \sigma_n^2(x)-\sigma^2(x)\right|&=\bigO_{\Pr}\left(h^2\log n+\frac{(\log n)^2}{\sqrt{nh}}+\frac{(\log n)^3}{nh^{5/2}}\right)
	\\
	&=\bigO_{\Pr}\left(h^2\log n+\frac{(\log n)^3}{\sqrt{nh}}\right)
	\]
	}
	For $f_n(x)$, similarly, by the same arguments as the proof for $\sigma_n^2(x)$, following the proof of \citep[Lemma 4.4]{Liu2010a}, we can obtain
{$\sup_x \left|f_n(x)-f(x)\right|=\bigO_{\Pr}\left(\frac{(\log n)^3}{\sqrt{nh}}+h^2\log n\right)$.}

\subsection{Proof of \cref{prop}}
Since $\{U_k\}_{k=0}^n$ are i.i.d. standard Gaussian distributed random variables, the proof for this proposition is simpler than \cref{thm_main}. We can immediately prove the convergence to Gumbel distribution by using \citep[Theorem 1]{Rosenblatt1976}.

\bibliographystyle{agsm}
\bibliography{time_series_bib}

\clearpage
\appendix


\end{document}